\newtheorem{theorem}{Theorem}[section]
\newtheorem{lemma}[theorem]{Lemma}
\newtheorem{remark}[theorem]{Remark}
\newtheorem{example}[theorem]{Example}
\newcommand{\E}{\operatorname*{\mathbb E}}
\newcommand{\argmax}{\operatorname*{argmax}}
    \let\Cref\crtCref
    \let\Cref\crtcref
\renewcommand{\tilde}{\widetilde}
\title{No-regret Learning in Repeated First-Price Auctions with Budget Constraints}
\author{%
    Rui Ai \thanks{Equal contribution} \\ Peking University \\ \texttt{ruiai@pku.edu.cn} \and
    Chang Wang \footnotemark[1] \\ Peking University \\ \texttt{wchang@pku.edu.cn} \and
    Chenchen Li \\ JD.COM \\ \texttt{lcc104@qq.com} \and
    Jinshan Zhang \\ Zhejiang University \\ \texttt{zhangjinshan@zju.edu.cn} \and
    Wenhan Huang \\ Shanghai Jiaotong University \\ \texttt{rowdark@sjtu.edu.cn} \and
    Xiaotie Deng \thanks{Corresponding author} \\ Peking University \\ \texttt{xiaotie@pku.edu.cn}
}
\begin{document}

\maketitle

\begin{abstract}
    Recently the online advertising market has exhibited a gradual shift from second-price auctions to first-price auctions. Although there has been a line of works concerning online bidding strategies in first-price auctions, it still remains open how to handle budget constraints in the problem. In the present paper, we initiate the study for a buyer with budgets to learn online bidding strategies in repeated first-price auctions. We propose an RL-based bidding algorithm against the optimal non-anticipating strategy under stationary competition. Our algorithm obtains $\tilde O(\sqrt T)$-regret if the bids are all revealed at the end of each round. With the restriction that the buyer only sees the winning bid after each round, our modified algorithm obtains $\tilde O(T^{\frac{7}{12}})$-regret by techniques developed from survival analysis. Our analysis extends to the more general scenario where the buyer has any bounded instantaneous utility function with regrets of the same order.
\end{abstract}

\section{Introduction}
The market for online advertising has extensively grown in recent years. It is estimated that the volume of online advertising worldwide will reach 500 billion dollars in 2022 \citep{spending2021}. In such a market, advertising platforms use auctions to allocate ad opportunities. Typically, the advertiser has a limited amount of capital for an advertisement campaign and thus all rounds of competition are \emph{interconnected by budgets}. Furthermore, the advertiser has very limited knowledge of 1) her valuation of certain keywords and 2) the competition she is facing. Many works have been devoted to developing algorithms for \emph{learning} strategies for optimally spending the budget in repeated \emph{second}-price auctions (see \Cref{subsec:related_work}).

Nevertheless, we have witnessed numerous switches from second-price auctions to first-price auctions in the online advertising market. A recent remarkable example is Google AdSenses' integrated move at the end of 2021 \citep{google}. Earlier examples also include AppNexus, Index Exchange, and OpenX \citep{sluis2017big}. This industry-wide shift is due to various factors including a fairer transactional
process and increased transparency. Further, the shift to first-price auctions brings about major importance to the following open question which is barely considered in previous works:
\begin{center}
    \itshape How should budget-constrained advertisers learn to compete in repeated \emph{first-price} auctions?
\end{center}
This paper thus initiates the study of learning to bid with budget constraints in repeated first-price auctions. We note that the application of first-price auctions with budgets is not limited to online advertising mentioned above. Traditional competitive environments like mussel trade in Netherlands \citep{van2001sealed}, modern price competition, and procurement auctions (e.g. U.S. Treasury Securities auction \citep{chari1992us}) are examples as well.

\paragraph{Challenges and contributions}
The challenges in this setting are two-fold.

The first challenge relates to information structure. In practical auctions, it is often the case that only the highest bid is revealed \citep{esponda2008information}. This is known as \emph{censored}-feedback or \emph{winner's curse} in literature \citep{capen1971competitive}. This affects the information structure of learning since the buyer learns less information when she wins. This makes the problem challenging compared to standard contextual bandits (c.f. \Cref{subsec:related_work}).

The second challenge is more fundamental. It is known that the strategy in first-price auctions is notoriously complex to analyze, even in the static case \citep{lebrun1996existence}. To get an intuitive feeling of this difficulty in our problem compared to repeated second-price auctions, let us consider the offline case where the opponents' bids are all known. Given the budget, the problem for second-price auctions can be reduced to a pure knapsack problem, where the budget is regarded as weight capacity and prices as weights. This structure enables mature techniques including duality theory to be applied to study the benchmark strategy. Pitifully in first-price auctions, since the payment depends on the buyer's own bid, the previous approach/benchmark is not directly usable. We provide a concrete example to further illustrate such difficulty.
\begin{example}\label{exa:benchmark}
    Consider a case where the buyer's value $v$ follows a uniform distribution on $[0.4, 1]$ and the highest bid $m$ of the opponents' follows a uniform distribution on $[0, 0.5]$. The budget $B$ is set to be half of the time horizon $0.5T$. Now, the first-best benchmark (an anticipating\footnote{An algorithm is anticipating if bid selection depends on future observations, see \citet{flajolet2017real}.} strategy) is
    \begin{align*}
        & \E_{\substack{\bm v \sim F^T \\ \bm m \sim G^T}} \left[\max_{b_1, \dotsc, b_T} \sum_{t=1}^T (v_t-b_t) \mathbf 1_{\{b_t \geq m_t\}} \right]
        \\ & \text{subject to} \quad \sum_{t=1}^T \mathbf 1_{\{b_t \geq m_t\}} b_t \leq B \quad \forall v_1, \dotsc, v_T; m_1, \dotsc, m_T.
    \end{align*}
    In hindsight, we need to pay as little as possible. Using the theory of knapsack, the utility turns out to be $T \cdot \E[\mathbf 1_{\{V \geq M\}}(V-M)]^+=0.45T$. On the contrary, the optimal non-anticipating bidding strategy in a first-price auction is to bid $\frac{V}{2}$ and the utility is $T \cdot \E[\mathbf 1_{\{\frac{V}{2} \geq M\}}\frac{V}{2}]=0.26T$. There is already an $\Omega(T)$ separation between the first-best benchmark and the ideal case with full information.
\end{example}
This example shows that simple characterization of the optimal in previous works is not applicable. Indeed, it remains unclear what methodology can be applied in first-price auctions with budgets (see \citep[\S 2.4]{balseiro2019learning} for further discussions).

The present paper takes the first step to combat the challenges mentioned above with a dynamic programming approach. Correspondingly, our contribution is also two-fold:
\begin{itemize}
    \item We provide an RL-based learning algorithm. Through the characterization of the optimal strategy, we obtain $\tilde O(\sqrt T)$-regret guarantee for the algorithm in the full-feedback case\footnote{This is especially practical in public-sector auctions \citep{chari1992us} as regulations mandate all bids to be revealed.}.
    \item In the censored-feedback setting, by techniques developed from survival analysis, we modify our algorithm and obtain a regret of $\tilde O(T^{\frac{7}{12}})$.
\end{itemize}

\subsection{Related Work}\label{subsec:related_work}

\paragraph{Repeated second-price auctions with budgets}
There is a flourishing source of literature on bidding strategies in repeated auctions with budgets. Through the lens of online learning, \citet{balseiro2019learning} identify asymptotically optimal online bidding strategies known as \emph{pacing} (a.k.a. bid-shading in literature) in repeated second-price auctions with budgets. Inspired by the pacing strategy, \citet{flajolet2017real} develop no-regret non-anticipating algorithms for learning with contextual information in repeated second-price auctions. Another line of works that uses similar techniques in the present paper includes \cite{amin2012budget, tran2014efficient, gummadi2012repeated}. \citet{gummadi2012repeated} and \citet{amin2012budget} study bidding strategies in repeated second-price auctions with budget constraints, but the former does not involve any learning and the latter does not provide any regret analysis (their estimator is biased). \citet{tran2014efficient} derive regret bounds for the same scenario but the optimization objective is the number of items won instead of value or surplus. \citet{baltaoglu2017online} also use dynamic programming to tackle repeated second-price auctions with budgets. However, they assume per-round budget constraints and their dynamic programming algorithm is for allocating bids among multiple items. Again, we emphasize that no prior work has been done in repeated first-price auctions with budgets, since the structure of the problem (compared to second-price variants) is fundamentally different (recall \Cref{exa:benchmark}).

\paragraph{Repeated first-price auctions without budgets} 
Two notable works concerning repeated first-price auctions are \cite{han2020optimal} and \cite{han2020learning}. In \cite{han2020optimal}, they introduce a new problem called monotone group contextual bandits and then obtain an $O(\sqrt T \ln ^2T)$-regret algorithm for repeated first-price auctions \emph{without} budget constraints under stationary settings. In \cite{han2020learning}, they concentrate on an adversarial setting and develop a mini-max optimal online bidding algorithm with $O(\sqrt T \ln T)$ regret against all Lipschitz bidding strategies. A crucial difference is that in the present paper, budgets are involved thus the bandit algorithm by \citet{han2020optimal} is not suitable for our needs.

\paragraph{Bandit with knapsack} From the bandit side, \citet{badanidiyuru2013bandits} develop bandit algorithms under resource constraints. They show that their algorithm can be used in dynamic procurement, dynamic posted pricing with limited supply, etc.
However, since the bidder observes her value \emph{before} bidding in our problem, results by \citet{badanidiyuru2013bandits} cannot be directly applied to our setting. Our setting also relates to contextual bandit problems with resource constraints \citep{badanidiyuru2014resourceful, agrawal2016linear, agrawal2016efficient}. Nevertheless, applying this contextual bandit approach requires discretizing the action space, which needs Lipschitz continuity of distributions. Our approach does not rely on any continuity assumption. Further, the performance guarantee (typically $\tilde O(T^{\frac{2}{3}})$) is worse than ours. It also does not fit into our information structure when the feedback is censored.

\section{Preliminaries}

\paragraph{Auction mechanism}
We consider a repeated first-price auction with budgets. Specifically, we suppose that the buyer has a limited budget $B$ to spend in a time horizon of $T \leq +\infty$ (can be \emph{unknown} to her) rounds. At the beginning of each round $t=1, 2, \dotsc, T$, the bidder privately observes a value $v_t$ for a fresh copy of item and bids $b_t$ according to her past observations $\bm h_t$ and value $v_t$. Denote the strategy she employs as $\pi \colon (v_t, B_t, \bm h_t) \to b_t$, which maps her current budget $B_t$, value $v_t$ and past history $\bm h_t$ to a bid. Let $m_t$ be the maximum bid of the other bidders. Since the auction is a \emph{first} price auction, if $b_t$ is larger than $m_t$, then the buyer wins the auction, is charged $b_t$, and obtains a utility of $r_t$; otherwise she loses and $r_t=0$. Therefore, the instantaneous utility of the buyer is
\[ r_t=(v_t-b_t) \mathbf 1_{\{b_t \geq m_t\}}. \]

The exact information structure of history the buyer observes is dictated by how the mechanism reveals $m_t$. In full generality, we assume that the feedback is censored, i.e. only the highest bid is revealed at the end of each round and the winner does not observe $m_t$ exactly. This is considered to be an informational version of \emph{winner's curse} \citep{capen1971competitive} and is of practical interest \citep{esponda2008information}. For the purpose of modeling, we suppose that ties are broken in favor of the buyer but this choice is arbitrary and by no means a limitation of our approach.

Next, we state the assumptions on $m_t$ and $v_t$. Without loss of generality, we assume that $b_t, m_t, v_t$ are normalized to be in $[0, 1]$. In the present paper, we consider a stochastic setting where $m_t$ and $v_t$ are drawn from some distributions $F, G$ \emph{unknown} to the buyer, respectively, and independent from the past. We will also refer to the cumulative distribution functions of $F, G$ with the same notations. No further assumptions will be made on $F, G$. Now, the expected instantaneous utility of the buyer at time $t$ with strategy $\pi$ is
\[ R^{\pi}(v_t, b_t) = \E_{m_t \sim F}[r_t]=(v_t-b_t)F(b_t). \]
To argue for the reasonability of this assumption, note that although other buyers may also involve learning behavior, it is typical that in a real advertising market, there are a large number of buyers. The specific buyer only faces a different small portion of them and is completely oblivious of whom she is facing in each round. Therefore, the buyer's sole objective is to maximize her utility (instead of fooling other buyers) and by the law of large numbers, the price $m_t$ and value $v_t$ the buyer observes are independent and identically distributed at least for a period of time.

\paragraph{Buyer's target and regret}
The buyer aims at maximizing her long-term accumulated utility subject to the budget constraints. Recall that the instantaneous utility of the buyer is $r_t=(v_t-b_t) \mathbf 1_{\{b_t \geq m_t\}}$. The payment is $c_t=b_t \mathbf 1_{\{b_t \geq m_t\}}$ and the budget will then decrease accordingly as the payment incurs. She can continue to bid as long as budget has not run out but must stop at
\[ \tau^* = \min \left\{T+1, \min \left\{t \in \mathbb N : \sum_{\tau=1}^t c_{\tau} = B \right\}+1 \right\}. \]
The buyer's problem now becomes to determine how much to bid in each round to maximize her accumulated utility. In line with works \cite{gummadi2012repeated, golrezaei2019dynamic, deng2021prior}, the buyer adopts a discount factor $\lambda \in (0, 1)$. She takes discounts since she does not know $T$ or $\tau^*$. Discount factors are interpreted to be the estimate of the probability that the repeated auction will last for at least $t$ rounds \citep{devanur2014perfect, drutsa2018weakly}. On the economic side, in important real-world markets like online advertising platforms, the buyers are impatient for opportunities since companies of different sizes have different capabilities. Discount factors model how impatient the buyer is \citep{drutsa2017horizon, vanunts2019optimal}.
Now the buyer's optimization problem is to determine a \emph{non-anticipating} strategy $\pi$ for the following:
\begin{align*}
    & \max_{\pi} \quad \E_{\substack{\bm v \sim F^T \\ \bm m \sim G^T}} \left[\sum_{t=1}^T \lambda^{t-1} r_t \right]
    \\ & \text{subject to} \quad \sum_{t=1}^T \mathbf 1_{\{b_t \geq m_t\}} b_t \leq B \quad \forall v_1, \dotsc, v_T; m_1, \dotsc, m_T,
\end{align*}
where $b_t = \pi(v_t, B_t, \bm h_t)$. Here, $\bm v \coloneqq (v_1, \dotsc, v_T)$ denotes the sequence of private values the buyer observes and $\bm m \coloneqq (m_1, \dotsc, m_T)$ is the sequence of the highest bids of the other bidders. $V^{\pi}(B, t)$ denotes the expected accumulated utility using strategy $\pi$ with budget $B$ and starting from time $t$. Let $\pi^*$ denote the optimal bidding strategy with the knowledge of the underlying distributions $F$ and $G$. The corresponding expected accumulated utility is $V^{\pi^*}$.

We now come to define the regret. Given time $T$, the regret is defined to be the difference between the accumulated utility of the buyer's bidding strategy $\pi$ and that of the optimal non-anticipating bidding strategy $\pi^*$
\begin{equation}\label{eqn:def_regret}
    \text{Regret} = \E_{\bm v \sim G^T} \left[\sum_{t=1}^T R^{\pi^*}(v_t, b_t^*)-R^{\pi}(v_t, b_t) \right].
\end{equation}

\section{Bidding Algorithm and Analysis}
In this section, we present our bidding algorithm and the high-level ideas in the analysis of regret. We first consider the case where the feedback is not censored, i.e. the buyer is aware of $m_t$ no matter whether she wins or not. Then we extend our algorithm to the case where the feedback is censored with techniques developed from survival analysis.

\subsection{Full Feedback}\label{subsec:full_feedback}
When $F$ and $G$ are known, the buyer's problem can be viewed as offline. The technical challenge lies in the observation that even when the distributions are known, the buyer's problem cannot be directly analyzed as a knapsack problem. To tackle this challenge, we use a dynamic programming approach to solve the problem. In particular, the optimal strategy $\pi^*$ satisfies the following Bellman equation:
\begin{align*}
    b^*(B_{\tau}, v) &\in \argmax_b [(v-b) + \lambda V(B_{\tau}-b, \tau+1)] F(b) + \lambda V(B_{\tau}, \tau+1)(1 - F(b)), \\
    V(B_{\tau}, \tau) &= \E_v [(v-b^*) + \lambda V(B_{\tau}-b^*, \tau+1)] F(b^*) + \lambda V(B_{\tau}, \tau+1)(1 - F(b^*)),
\end{align*}
for all $\tau \in \mathbb N$ and $0 \leq B_{\tau} \leq B$. Note that for any $B_{\tau}<0$, $V(B_{\tau}, \tau)=-\infty$. By choosing appropriate initialization conditions, we can solve the equation recursively and obtain the optimal bidding strategy together with the function $V(\cdot, \cdot)$. The above recursion also characterizes the optimal solution, which will be used in the analysis later. 

When the buyer does not have the information of $F$ and $G$, she can learn the distributions from past observations. Therefore, it is natural to maintain estimations $\hat{F}$ and $\hat{G}$ of the true distributions. Our algorithm for the full-feedback case is now depicted in \Cref{algo:full_feedback}. To ease technical loads, we first assume the knowledge of $G$ and only estimate $F$ in \Cref{algo:full_feedback}. Later, we will add the estimation of $G$ and its analysis is presented in \Cref{cor:unknownG}.

\begin{algorithm}[htbp]
    \caption{Algorithm for the full-feedback case}
    \label{algo:full_feedback}
    \begin{algorithmic}[1]
        \State \textbf{Input}: Initial budget $B$ and constant $C_1$ \Comment{$C_1$ is an arbitrary positive constant}
        \State Initialize the estimation $\hat{F}$ of $F$ to a uniform distribution over $[0, 1]$ and $B_1 \gets B$ \label{algoline:init_full}
        \For{$t=1, 2, \dotsc$}
            \State Observe the value $v_t$ in round $t$
            \State Let $t_0$ be the smallest integer that satisfies $\lambda^{t_0-t} \frac{1}{1 - \lambda} < \frac{C_1}{\sqrt t}$\label{algoline:init}
            \State Set $V_{\hat F}(B_{t_0}, t_0)=0$ for any $B_{t_0}$ \Comment{$V_{\hat F}$ is algorithm's estimation of $V$}
            \For{$\tau=t_0, t_0-1, \dotsc, t$}\label{algoline:full_dp}
                \State $Q_{v, \hat{F}}(B_{\tau}, \tau, b) \gets [(v-b) + \lambda V_{\hat F}(B_{\tau}-b, \tau+1)] \hat{F}(b) + \lambda V_{\hat F}(B_{\tau}, \tau+1)(1 - \hat{F}(b))$
                \State Solve the optimization problem $\hat b_{\tau}^* \gets \argmax_b Q_{v, \hat{F}}(B_{\tau}, \tau, b)$
                \State $V_{\hat F}(B_{\tau}, \tau) \gets \E_{v \sim G}[Q_{v, \hat{F}}(B_{\tau}, \tau, \hat b_{\tau}^*)]$
            \EndFor \label{algoline:dp_full_end}
            \State Place a bid $\hat b_t \gets \argmax_b Q_{v, \hat{F}}(B_t, t, b)$
            \State Observe $m_t, c_t$ and $r_t$ from this round of auction and update $\hat{F}(x) = \frac{1}{t} \sum_{i=1}^t \mathbf 1_{\{m_i \leq x\}}$. \label{algoline:final_full}
            \State $B_{t+1} \gets B_t-c_t$. If $B_{t+1} \leq 0$ then halt. \label{algoline:final_full_end}
        \EndFor
    \end{algorithmic}
\end{algorithm}

Similar to prior work \citep{amin2012budget}, \Cref{algo:full_feedback} performs exploration and exploitation simultaneously. The buyer initializes the estimation of $F$ to a uniform distribution (\Cref{algoline:init_full}). At round $t$, the buyer observes her valuation. Then, she uses her estimation of $F$ to solve the dynamic programming problem recursively\footnote{For the non-trivial case $B \leq T$, this can be solved in $O \left(\frac{T^{4.5}}{(1 - \lambda)^6} \right)$ time with only $O(T^{-\frac{1}{2}})$ loss \citep[see, e.g.][]{chow1989complexity}.} to obtain an estimation of the optimal bid (\Cref{algoline:full_dp}\textasciitilde\Cref{algoline:dp_full_end}). To provide a base case for recursion, note that for sufficiently large $t_0 \gg t$, $V_{\hat F}(\cdot, t_0)$'s impact to $V_{\hat F}(\cdot, t)$ is negligible due to the discount $\lambda^{t_0-t}$. Therefore, the buyer can approximate $V_{\hat F}(\cdot, t_0)$ with zero for $t_0$ (\Cref{algoline:init}). Finally, the auction proceeds with $m_t, r_t, c_t$ revealed and the buyer updates her information accordingly (\Cref{algoline:final_full}\textasciitilde\Cref{algoline:final_full_end}).

\paragraph{Analysis of regret}
The establishment of the regret bounds will be given in two steps. First, we will show that the buyer's estimation $V_{\hat F}$ is approximately accurate with a sufficient number of samples. This relies on the estimation of $F$. Concentration inequalities are thus intrinsic to our analysis. Second, we bridge the regret defined in \Cref{eqn:def_regret} with $V$ and $V_{\hat F}$. This is done by a series of auxiliary quantities measuring the regret. We obtain the desired result by combining the two steps.

We first present the following lemma \citep{dvoretzky1956asymptotic, massart1990tight}, which states a uniform convergence result for the estimation of cumulative distribution functions.

\begin{lemma}[Dvoretzky--Kiefer--Wolfowitz]\label{lem:DKW}
    Given $t \in \mathbb N$, let $m_1, m_2, \dotsc, m_t$ be real-valued independent and identically distributed random variables with cumulative distribution function $F$. Let $\hat F_t$ denote the associated empirical distribution function defined by $\hat F_t(x)=\frac{1}{t}\sum_{i=1}^t \mathbf{1}_{\{m_i\le x\}}$ where $x\in \mathbb{R}$. Then with probability $1-\delta$, it holds
    \[ \sup_x |\hat F_t(x)-F(x)| \leq \sqrt{\frac{1}{2} \ln \frac{2}{\delta}} t^{-\frac{1}{2}}. \]
\end{lemma}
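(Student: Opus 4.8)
The final statement is the classical Dvoretzky--Kiefer--Wolfowitz inequality equipped with Massart's optimal constant, so the ``proof'' is really a recollection of a known argument rather than something new. The plan is to first pass to the equivalent exponential tail form: it suffices to show $\Pr[\sup_x |\hat F_t(x)-F(x)| > \epsilon] \le 2e^{-2t\epsilon^2}$ for every $\epsilon>0$, since setting the right-hand side equal to $\delta$ and solving gives $\epsilon = \sqrt{\tfrac12 \ln \tfrac2\delta}\, t^{-1/2}$, which is exactly the claimed bound.

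First I would reduce to the uniform case. Define $U_i = F(m_i)$ (using the generalized inverse of $F$ to handle atoms); one checks that $\sup_x |\hat F_t(x)-F(x)|$ is stochastically dominated by the same quantity computed from $t$ i.i.d.\ $\mathrm{Unif}[0,1]$ samples, and for the uniform law the supremum over $x\in\mathbb R$ is attained among the $t$ order statistics, hence reduces to a maximum over finitely many events, each governed by a Binomial tail. Next I would split the two-sided deviation into the two one-sided events $\{\sup_x (\hat F_t - F) > \epsilon\}$ and $\{\sup_x (F - \hat F_t) > \epsilon\}$; a union bound then produces the factor $2$, provided each one-sided probability is at most $e^{-2t\epsilon^2}$.

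The heart of the matter is therefore the one-sided bound $\Pr[\sup_x (\hat F_t(x)-F(x)) > \epsilon] \le e^{-2t\epsilon^2}$. A short route to a qualitatively similar statement --- the same bound with a large universal constant $C$ in place of $1$ --- proceeds by a peeling/union-bound argument over the order statistics combined with Hoeffding's inequality applied to each Binomial deviation, and this weaker form would already be more than enough for every downstream $\tilde O(\sqrt t)$-style estimate in the paper. Securing the sharp constant $1$, which is what makes the clean $\sqrt{\tfrac12 \ln \tfrac2\delta}$ factor appear, is the genuinely delicate step and is precisely Massart's contribution: it requires identifying a one-parameter family of worst-case distributions, reducing the general tail to that family, and then estimating the resulting Binomial (or Poissonized) probabilities exactly rather than through a crude envelope. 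Accordingly, the main obstacle --- should one insist on a self-contained argument with the optimal constant --- is exactly this extremal reduction; otherwise one proves the one-sided inequality with a suboptimal constant and cites Massart for the sharp one, which is the route we take.
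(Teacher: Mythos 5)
Your proposal is correct and matches the paper's treatment: the paper states this lemma as a known result and simply cites Dvoretzky--Kiefer--Wolfowitz and Massart without proof, and your outline (reduce to the uniform case, union-bound the two one-sided deviations, and defer to Massart for the sharp constant $1$ in the exponent) is the standard route to it. The only minor caveat is that Massart's one-sided bound $e^{-2t\epsilon^2}$ formally requires $e^{-2t\epsilon^2}\le \tfrac12$, but the two-sided statement used here holds unconditionally since it is trivial when $2e^{-2t\epsilon^2}\ge 1$.
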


With \Cref{lem:DKW} in hand, given any $T$, we show a bound for the difference between $V(B_t, t)$ and $V_{\hat F} (B_t, t)$ for any $1\le t\le T$ (recall that $V(B_t, t)$ is the accumulated utility of the optimal strategy with the knowledge of $F$). We prove this using induction. Note that the induction basis is a little tricky due to the base case of recursion in \Cref{algo:full_feedback}. Therefore, in the following lemma, we first deal with the induction step.
\begin{lemma}\label{lem:vv_diff}
    For any round $t\le  T$, budget $B_t$ and with probability at least $1 - \frac{\delta}{T}$, we have the following bounds for the estimated $V_{\hat F}$ and the ground truth with $F$ if $\sup_{B_{t_0}} |V(B_{t_0},t_0)-V_{\hat F}(B_{t_0},t_0)|\le \sqrt{\frac{1}{2} \ln \frac{2T}{\delta}} \frac{1 + \lambda}{(1 - \lambda)^2} t^{-\frac{1}{2}}$:
    \[ |V(B_t, t) - V_{\hat F}(B_t, t)| \leq Ct^{-\frac{1}{2}} = \tilde O \left(\frac{1}{\sqrt t} \right) \quad \text{where} \quad C = \sqrt{\frac{1}{2} \ln \frac{2T}{\delta}} \frac{1 + \lambda}{(1 - \lambda)^2}. \]
\end{lemma}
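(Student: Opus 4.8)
The plan is a backward induction along the dynamic-programming recursion of \Cref{algo:full_feedback}, from the truncation level $\tau=t_0$ down to $\tau=t$, maintaining the invariant $e_\tau \coloneqq \sup_{B_\tau}\lvert V(B_\tau,\tau)-V_{\hat F}(B_\tau,\tau)\rvert \le Ct^{-1/2}$; the base case $\tau=t_0$ is precisely the hypothesis of the lemma, and the conclusion is the invariant at $\tau=t$. To prepare the induction step I would first record two deterministic facts. First, from the Bellman recursion and its estimated counterpart, together with the leaf value $V_{\hat F}(\cdot,t_0)=0$ and the fact that the per-round utility lies in $[0,1]$, both $V$ and $V_{\hat F}$ take values in $[0,\tfrac{1}{1-\lambda}]$. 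Second, since the algorithm forms $\hat F$ from roughly $t$ i.i.d.\ samples of $F$, \Cref{lem:DKW} gives, on a single event of probability at least $1-\delta/T$, the uniform bound $\sup_x\lvert\hat F(x)-F(x)\rvert\le\varepsilon$ with $\varepsilon\coloneqq\sqrt{\tfrac12\ln\tfrac{2T}{\delta}}\,t^{-1/2}$; the entire lemma is proven under this event.

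For the induction step, assume $e_{\tau+1}\le Ct^{-1/2}$, fix a budget $B_\tau$ and a value $v$, and compare $Q_{v,F}(B_\tau,\tau,\cdot)$ with $Q_{v,\hat F}(B_\tau,\tau,\cdot)$ by interposing the hybrid $Q^{\mathrm{mid}}(B_\tau,\tau,b)\coloneqq[(v-b)+\lambda V_{\hat F}(B_\tau-b,\tau+1)]F(b)+\lambda V_{\hat F}(B_\tau,\tau+1)(1-F(b))$, which uses the true $F$ but the estimated value function. The difference $Q_{v,F}-Q^{\mathrm{mid}}$ is the convex combination, with weights $F(b)$ and $1-F(b)$, of the two value errors $\lambda[V(\cdot,\tau+1)-V_{\hat F}(\cdot,\tau+1)]$, hence bounded by $\lambda e_{\tau+1}$; the coefficient is $\lambda$ and not $2\lambda$ exactly because the win/lose probabilities sum to one. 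The difference $Q^{\mathrm{mid}}-Q_{v,\hat F}$ factors cleanly as $(F(b)-\hat F(b))\cdot\bigl[(v-b)+\lambda V_{\hat F}(B_\tau-b,\tau+1)-\lambda V_{\hat F}(B_\tau,\tau+1)\bigr]$, whose bracketed factor has absolute value at most $1+\tfrac{2\lambda}{1-\lambda}=\tfrac{1+\lambda}{1-\lambda}$ by the boundedness fact above; so this difference is at most $\tfrac{1+\lambda}{1-\lambda}\varepsilon$. Adding the two yields $\sup_b\lvert Q_{v,F}(B_\tau,\tau,b)-Q_{v,\hat F}(B_\tau,\tau,b)\rvert\le\lambda e_{\tau+1}+\tfrac{1+\lambda}{1-\lambda}\varepsilon$, uniformly in $B_\tau$ and $v$.

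Since $V(B_\tau,\tau)=\E_{v\sim G}\max_b Q_{v,F}(B_\tau,\tau,b)$ and $V_{\hat F}(B_\tau,\tau)=\E_{v\sim G}\max_b Q_{v,\hat F}(B_\tau,\tau,b)$ are expectations of maxima of these $Q$-functions over the common feasible set $\{b: 0\le b\le\min(1,B_\tau)\}$, the elementary inequality $\lvert\max_b f-\max_b g\rvert\le\sup_b\lvert f-g\rvert$ upgrades the previous display to $e_\tau\le\lambda e_{\tau+1}+\tfrac{1+\lambda}{1-\lambda}\varepsilon$. Plugging in $e_{\tau+1}\le Ct^{-1/2}$ and observing that the explicit value $C=\sqrt{\tfrac12\ln\tfrac{2T}{\delta}}\,\tfrac{1+\lambda}{(1-\lambda)^2}$ is engineered so that $\tfrac{1+\lambda}{1-\lambda}\varepsilon=(1-\lambda)\,Ct^{-1/2}$, we get $e_\tau\le\lambda Ct^{-1/2}+(1-\lambda)Ct^{-1/2}=Ct^{-1/2}$, which closes the induction; equivalently, unrolling from $t_0$ down to $t$ expresses $e_t$ as a convex combination of $\lambda^{t_0-t}e_{t_0}$ and $\tfrac{1+\lambda}{(1-\lambda)^2}\varepsilon$, both at most $Ct^{-1/2}$. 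Taking $\tau=t$ gives the stated bound.

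I expect the only delicate point to be the two-term decomposition producing the recursion coefficient $\lambda$ rather than $2\lambda$, together with the matching bookkeeping that makes the freshly introduced $\hat F$-error exactly $(1-\lambda)\,Ct^{-1/2}$, so that it and $\lambda\,Ct^{-1/2}$ close up to $Ct^{-1/2}$. A naive simultaneous perturbation of $F$ and $V$ inside the product $[(v-b)+\lambda V(B_\tau-b,\tau+1)]F(b)$ loses a factor of two and makes the recursion diverge once $\lambda>1/2$, so routing through the hybrid $Q^{\mathrm{mid}}$ and exploiting the probability-simplex structure of the Bellman update is essential. Everything else—\Cref{lem:DKW}, the uniform bounds on $V$ and $V_{\hat F}$, and the near-optimality comparison of the two argmaxes—is routine.
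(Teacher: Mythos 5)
Your proof is correct and follows essentially the same route as the paper: a backward induction from $t_0$ down to $t$ yielding the recursion $e_\tau \le \lambda e_{\tau+1} + \tfrac{1+\lambda}{1-\lambda}\varepsilon$, which closes because $\tfrac{1+\lambda}{1-\lambda}\varepsilon = (1-\lambda)Ct^{-1/2}$. The only (cosmetic) difference is that you bound $\sup_b\lvert Q_{v,F}-Q_{v,\hat F}\rvert$ and invoke $\lvert\max_b f-\max_b g\rvert\le\sup_b\lvert f-g\rvert$, whereas the paper argues one direction via the sub-optimality of $b_t^*$ under $\hat F$ and then appeals to symmetry; the decomposition itself (value-function error with weights summing to one, plus a factored $\hat F$-error term bounded by $\tfrac{1+\lambda}{1-\lambda}\varepsilon$) is the same.
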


Note that the lemma above assumes that $\sup_{B_{t_0}} |V(B_{t_0},t_0)-V_{\hat F}(B_{t_0},t_0)|$ is bounded above by $\sqrt{\frac{1}{2} \ln \frac{2T}{\delta}} \frac{1 + \lambda}{(1 - \lambda)^2} t^{-\frac{1}{2}}$, which holds if the base case $V_{\hat F}(B_{t_0},t_0)$ is set accurately (i.e. $V(B_{t_0},t_0)=V_{\hat F}(B_{t_0},t_0)$). We use $\tilde V_{\hat F}(B_t, t)$ to denote the estimated value function using $\hat{F}$ if the base is indeed accurate. In the following lemma, we show that using the alternative initialization method specified in \Cref{algoline:init} of \Cref{algo:full_feedback}, $\sup_{B_t} |V_{\hat F}(B_t, t) - \tilde V_{\hat F}(B_t, t)|$ actually satisfies the condition of \Cref{lem:vv_diff}.

\begin{lemma}\label{lem:tildehat_diff}
    Suppose $\tilde{V}_{\hat F}(B_{t_0},t_0)=V(B_{t_0},t_0)$ and $\tilde{V}_{\hat F}(B_t,t)$ is then computed by the recursive procedure in \Cref{algo:full_feedback}. Then it holds that for any $\tau\le t_0$ and $B_\tau$:
    \[ |V_{\hat F}(B_\tau,\tau) - \tilde{V}_{\hat F}(B_\tau,\tau)|\le \frac{1}{1-\lambda}\lambda^{t_0-\tau}. \]
    In particular, when $\tau=t$, we have $\sup_{B_t} |V_{\hat F}(B_t,t) - \tilde{V}_{\hat F}(B_t,t)| \leq \frac{C_1}{\sqrt t}$ (by construction of $t_0$).
\end{lemma}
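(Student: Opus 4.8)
The plan is to exploit the fact that $V_{\hat F}(\cdot,\tau)$ and $\tilde V_{\hat F}(\cdot,\tau)$ are produced by \emph{exactly the same} backward recursion — the one in \Cref{algo:full_feedback}, driven by the estimate $\hat F$ — the only difference being the terminal condition at $\tau=t_0$: $V_{\hat F}(\cdot,t_0)\equiv 0$ whereas $\tilde V_{\hat F}(\cdot,t_0)=V(\cdot,t_0)$. I will therefore prove the claim by backward induction on $\tau$ from $t_0$ down to $t$, showing that the sup-norm (over the budget argument) of the gap between the two value functions is multiplied by at most $\lambda$ at each step of the recursion, so that after $t_0-\tau$ steps it is at most $\lambda^{t_0-\tau}$ times its initial value.

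For the base case $\tau=t_0$, note that $|V_{\hat F}(B_{t_0},t_0)-\tilde V_{\hat F}(B_{t_0},t_0)|=|0-V(B_{t_0},t_0)|=V(B_{t_0},t_0)$, and since every instantaneous utility $r_\tau=(v_\tau-b_\tau)\mathbf 1_{\{b_\tau\ge m_\tau\}}$ lies in $[0,1]$ (bidding $0$ is always feasible, so $V\ge 0$) and the continuation is discounted by powers of $\lambda$, the Bellman recursion unrolls to $0\le V(B_{t_0},t_0)\le\sum_{k\ge 0}\lambda^k=\frac{1}{1-\lambda}=\frac{1}{1-\lambda}\lambda^{t_0-t_0}$, which is precisely the desired bound at $\tau=t_0$.

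For the inductive step, assume $\sup_{B}|V_{\hat F}(B,\tau+1)-\tilde V_{\hat F}(B,\tau+1)|\le\frac{1}{1-\lambda}\lambda^{t_0-\tau-1}$. Fix $B_\tau$ and a value $v$, and write $Q^{(1)}$, $Q^{(2)}$ for the two $Q$-functions built at step $\tau$ from $V_{\hat F}(\cdot,\tau+1)$ and $\tilde V_{\hat F}(\cdot,\tau+1)$ respectively. In $Q^{(1)}-Q^{(2)}$ the two differences $\lambda\bigl(V_{\hat F}(B_\tau-b,\tau+1)-\tilde V_{\hat F}(B_\tau-b,\tau+1)\bigr)$ and $\lambda\bigl(V_{\hat F}(B_\tau,\tau+1)-\tilde V_{\hat F}(B_\tau,\tau+1)\bigr)$ appear with weights $\hat F(b)\ge 0$ and $1-\hat F(b)\ge 0$ summing to $1$; hence for every feasible $b$ (i.e.\ $b\le B_\tau$, so the budget argument stays nonnegative),
\[ |Q^{(1)}(B_\tau,\tau,b)-Q^{(2)}(B_\tau,\tau,b)| \le \lambda\bigl(\hat F(b)+(1-\hat F(b))\bigr)\frac{1}{1-\lambda}\lambda^{t_0-\tau-1} = \frac{1}{1-\lambda}\lambda^{t_0-\tau}. \]
For infeasible $b$ both $Q$-values are $-\infty$, so such $b$ is never selected by the $\argmax$ and may be ignored. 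Applying the elementary inequality $|\max_b f(b)-\max_b g(b)|\le\max_b|f(b)-g(b)|$ and then taking expectation over $v\sim G$ yields $|V_{\hat F}(B_\tau,\tau)-\tilde V_{\hat F}(B_\tau,\tau)|\le\frac{1}{1-\lambda}\lambda^{t_0-\tau}$, completing the induction. The ``in particular'' claim then follows by taking $\tau=t$ and invoking the defining property $\lambda^{t_0-t}\frac{1}{1-\lambda}<\frac{C_1}{\sqrt t}$ of $t_0$ from \Cref{algoline:init}.

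I expect no genuine obstacle here; the only points requiring care are the bookkeeping of the $-\infty$ convention when $b>B_\tau$ (handled by restricting the $\argmax$ to feasible bids, exactly as the algorithm does), and the observation that the $Q$-difference is a \emph{convex} combination of the time-$(\tau+1)$ gaps, which is what delivers the clean contraction factor $\lambda$ with no growth in the constant $\frac{1}{1-\lambda}$.
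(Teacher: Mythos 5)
Your proof is correct and follows essentially the same route as the paper's: a backward induction from $t_0$ showing the sup-gap contracts by a factor $\lambda$ per step (the paper phrases the contraction via the sub-optimality of the argmax under the other value function, which is the one-sided version of your $|\max_b f-\max_b g|\le\max_b|f-g|$ inequality). The only cosmetic difference is that the paper proves a slightly more general statement (Lemma~\ref{lem:diff_vvforanydist}, allowing any terminal initialization in $[0,\frac{1}{1-\lambda}]$) and specializes it, whereas you work directly with the two specific terminal conditions.
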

 
Synthesizing \Cref{lem:vv_diff} and \Cref{lem:tildehat_diff}, we have $|V(B_t, t) - V_{\hat F}(B_t, t)| \leq (C+C_1)t^{\frac{1}{2}}$. A crucial next step is to relate this bound to the final regret. This is achieved by two transformations. Roughly speaking, the buyer's ``regret'' can be viewed in two parts: 1) she does not bid according to the optimal strategy; 2) her strategy is not optimally spending the budget which leads to future losses. These two transformations are done with this intuitive observation and summarized in the following lemma that bounds the performance of the buyer's strategy. Below we first condition on the good event that the estimation succeeds for every $t$. Then we add the contribution of the bad event to the regret in \Cref{thm:regret_full_feedback}.

\begin{lemma}\label{lem:final_v}
    For any given $B_t$ and $t$, denote $V^\pi (B_t,t)=\E_{\bm v \sim G^T} \left[\sum_{\tau=t}^T \lambda^{\tau-t} R^\pi (v_{\tau},b_{\tau}) \right]$, then
    \[ |V(B_t,t)-V^\pi(B_t,t)|\le \frac{4\left(\sqrt{\frac{1}{2} \ln \frac{2T}{\delta}} \frac{1 + \lambda}{(1 - \lambda)^2}+C_1\right)}{(1-\lambda)\sqrt{t}}. \]
\end{lemma}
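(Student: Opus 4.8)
\emph{Approach.} I would bound the quantity by deriving a one-step recursion for the uniform gap $D_t := \sup_{B_t}\lvert V(B_t,t)-V^\pi(B_t,t)\rvert$ and unrolling it geometrically. Throughout I condition on the good event of \Cref{lem:vv_diff}, on which $\hat F$ and the dynamic-programming estimate $V_{\hat F}$ obey the stated bounds at every round; the dependence of $\pi$ on the observed history is harmless and I would carry it as a supremum over histories consistent with this event (and exclude weakly dominated bids exceeding the remaining budget, so every reachable state has $B_\tau\ge 0$). Fix a round $t$ and budget $B_t$. Write $b^{*}=b^{*}(v_t)$ for the optimal bid at $(v_t,B_t,t)$ and $\hat b_t=\hat b_t(v_t)$ for the bid placed by \Cref{algo:full_feedback}; let $Q_v(B_t,t;b)=[(v-b)+\lambda V(B_t-b,t+1)]F(b)+\lambda V(B_t,t+1)(1-F(b))$, let $Q^\pi_v$ be the same expression with $V$ replaced throughout by $V^\pi$, and let $Q_{v,\hat F}$ be the same expression with $F$ replaced by $\hat F$ and $V$ by $V_{\hat F}$, i.e.\ the objective \Cref{algo:full_feedback} actually maximizes. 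Peeling round $t$ off the definitions gives the Bellman-type identities $V(B_t,t)=\E_{v_t\sim G}[Q_v(B_t,t;b^{*})]$ and $V^\pi(B_t,t)=\E_{v_t\sim G}[Q^\pi_v(B_t,t;\hat b_t)]$.

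\emph{One-step decomposition.} I would then split, for each $v_t$,
\[
 Q_v(B_t,t;b^{*})-Q^\pi_v(B_t,t;\hat b_t)=\underbrace{[Q_v(B_t,t;b^{*})-Q_v(B_t,t;\hat b_t)]}_{(\mathrm I)}+\underbrace{[Q_v(B_t,t;\hat b_t)-Q^\pi_v(B_t,t;\hat b_t)]}_{(\mathrm{II})}.
\]
This realizes the two ``transformations'' described informally before the lemma: $(\mathrm{II})$ is the loss a suboptimal bid at round $t$ pushes into the future, controlled for free because it is a difference of continuation values at a common bid, $(\mathrm{II})=\lambda F(\hat b_t)[V(B_t-\hat b_t,t+1)-V^\pi(B_t-\hat b_t,t+1)]+\lambda(1-F(\hat b_t))[V(B_t,t+1)-V^\pi(B_t,t+1)]$, hence $\lvert(\mathrm{II})\rvert\le\lambda D_{t+1}$. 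The term $(\mathrm I)\ge 0$ is the per-round price of not bidding optimally; since $\hat b_t$ maximizes $Q_{v,\hat F}(B_t,t;\cdot)$, the sandwich
\[
 Q_v(b^{*})-Q_v(\hat b_t)=[Q_v(b^{*})-Q_{v,\hat F}(b^{*})]+[Q_{v,\hat F}(b^{*})-Q_{v,\hat F}(\hat b_t)]+[Q_{v,\hat F}(\hat b_t)-Q_v(\hat b_t)]\le 2\sup_b\lvert Q_v(B_t,t;b)-Q_{v,\hat F}(B_t,t;b)\rvert
\]
holds because the middle bracket is nonpositive.

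\emph{Pointwise discrepancy.} Expanding $Q_v-Q_{v,\hat F}$ and regrouping, I get
\[
 \lvert Q_v(B_t,t;b)-Q_{v,\hat F}(B_t,t;b)\rvert\le\lambda\sup_{B'}\lvert V(B',t+1)-V_{\hat F}(B',t+1)\rvert+\bigl(1+2\lambda\lVert V_{\hat F}\rVert_\infty\bigr)\sup_x\lvert F(x)-\hat F(x)\rvert,
\]
the key point being that the two continuation-value terms appear with weights $F(b)$ and $1-F(b)$, so their contribution collapses to a single $\sup_{B'}\lvert V-V_{\hat F}\rvert$ rather than twice that; I also use $\lVert V_{\hat F}\rVert_\infty\le\frac1{1-\lambda}$, a one-line induction on the recursion since instantaneous utilities lie in $[0,1]$. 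On the good event, \Cref{lem:DKW} gives $\sup_x\lvert F(x)-\hat F(x)\rvert\le\sqrt{\tfrac12\ln\tfrac{2T}{\delta}}\,t^{-1/2}$ and the synthesis of \Cref{lem:vv_diff} and \Cref{lem:tildehat_diff} gives $\sup_{B'}\lvert V(B',t+1)-V_{\hat F}(B',t+1)\rvert\le(C+C_1)t^{-1/2}$. Substituting, and noting $(1+\tfrac{2\lambda}{1-\lambda})\sqrt{\tfrac12\ln\tfrac{2T}{\delta}}=(1-\lambda)C$ by the definition of $C$, the two contributions become $\lambda(C+C_1)t^{-1/2}$ and $(1-\lambda)Ct^{-1/2}$, of total at most $2(C+C_1)t^{-1/2}$; therefore $(\mathrm I)\le 4(C+C_1)t^{-1/2}$.

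\emph{Unrolling.} Taking expectations over $v_t$ and the supremum over $B_t$ in the one-step decomposition gives $D_t\le 4(C+C_1)t^{-1/2}+\lambda D_{t+1}$ for every round $t\le T$, with $D_{T+1}=0$. Iterating, $D_t\le\sum_{\tau=t}^{T}\lambda^{\tau-t}\,4(C+C_1)\tau^{-1/2}\le 4(C+C_1)t^{-1/2}\sum_{\tau\ge t}\lambda^{\tau-t}=\frac{4(C+C_1)}{(1-\lambda)\sqrt t}$, the claimed bound. The step I expect to be most delicate is the pointwise discrepancy estimate: one has to keep straight that the \emph{whole} inner dynamic program at round $t$ is run with a single estimate $\hat F$ built from the first $\approx t$ samples (so \Cref{lem:vv_diff} contributes the factor $(C+C_1)t^{-1/2}$, not $(C+C_1)\tau^{-1/2}$ for the inner index $\tau$), and to exploit the precise constant in $C$ so that the DKW term is absorbed rather than leaving a stray $1/(1-\lambda)$; the good-event conditioning, the history-dependence of $\pi$, and bid feasibility are routine bookkeeping.
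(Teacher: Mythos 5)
Your proposal is correct and follows essentially the same route as the paper: the paper's Lemmas \ref{lem:r1} and \ref{lem:r2} perform exactly your two-part split (per-round suboptimality sandwiched through the estimated $Q$-function via \Cref{lem:DKW}, \Cref{lem:vv_diff} and \Cref{lem:tildehat_diff}, plus a $\lambda$-contraction of the continuation-value gap), and your geometric unrolling of $D_t\le 4(C+C_1)t^{-1/2}+\lambda D_{t+1}$ is the same backward induction the paper runs, with constants that land within the stated $4(C+C_1)/((1-\lambda)\sqrt t)$. The only cosmetic difference is the base case: the paper couples $V$ and $V^\pi$ at an arbitrary large $t_0$ and lets $t_0\to\infty$ rather than setting $D_{T+1}=0$, a bookkeeping point that does not affect the bound.
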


By \Cref{lem:final_v} and further transformations, we can now establish the regret bound of \Cref{algo:full_feedback}.
\begin{theorem}\label{thm:regret_full_feedback}
    Under the circumstance that $F$ is unknown, the worse-case regret of \Cref{algo:full_feedback} is $\tilde O(\sqrt T)$, where the regret is computed according to \Cref{eqn:def_regret}. Explicitly, if we take $C_1=1$,
    \[ \text{Regret} \leq \left(4\sqrt{\frac{1}{2} \ln 2T^2} \frac{1 + \lambda}{(1 - \lambda)^2}+5-\lambda \right) \sqrt T + \frac{ \lambda}{2} \log_\frac{1}{\lambda} \frac{T}{(1 - \lambda)^2}+1. \]
\end{theorem}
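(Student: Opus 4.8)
The plan is to bound the expected regret of \Cref{algo:full_feedback} by peeling off a low-probability failure event and then reducing everything to the value-function estimates already controlled in \Cref{lem:vv_diff}, \Cref{lem:tildehat_diff}, and \Cref{lem:final_v}. First I would invoke \Cref{lem:DKW} in each round $t$ with failure probability $\delta'=T^{-2}$, so that $\sup_x|\hat F_t(x)-F(x)|\le\sqrt{\frac12\ln(2T^2)}\,t^{-1/2}$, and take a union bound over the at most $T$ rounds to obtain a good event $\mathcal G$ with $\Pr[\mathcal G]\ge 1-T^{-1}$, on which \Cref{lem:vv_diff} and \Cref{lem:final_v} are available at every round $t\le T$ with $C=\sqrt{\frac12\ln(2T^2)}\,\frac{1+\lambda}{(1-\lambda)^2}$ (the $\delta=T^{-1}$ instantiation). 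On $\mathcal G^c$ each round's instantaneous gap $R^{\pi^*}(v_t,b_t^*)-R^\pi(v_t,b_t)$ is at most $v_t\le 1$, so the complementary event contributes at most $T\cdot\Pr[\mathcal G^c]\le 1$ to the regret of \Cref{eqn:def_regret} --- this is the additive $+1$ --- and it remains to bound the regret restricted to $\mathcal G$.

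On $\mathcal G$ I would compose \Cref{lem:tildehat_diff} and \Cref{lem:vv_diff} through the auxiliary quantity $\tilde V_{\hat F}$. \Cref{lem:tildehat_diff} shows that the truncated base case chosen in \Cref{algoline:init} costs only $\sup_{B_t}|V_{\hat F}(B_t,t)-\tilde V_{\hat F}(B_t,t)|\le C_1 t^{-1/2}$, which is precisely the hypothesis \Cref{lem:vv_diff} requires once it is read with $\tilde V_{\hat F}$ in place of $V_{\hat F}$; that lemma then yields $\sup_{B_t}|V(B_t,t)-\tilde V_{\hat F}(B_t,t)|\le Ct^{-1/2}$, and the triangle inequality gives $\sup_{B_t}|V(B_t,t)-V_{\hat F}(B_t,t)|\le(C+C_1)t^{-1/2}$ for all $t\le T$. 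Feeding this into \Cref{lem:final_v} yields $|V(B_t,t)-V^\pi(B_t,t)|\le\frac{4(C+C_1)}{(1-\lambda)\sqrt t}$ at every round and budget level.

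It then remains to carry out the ``further transformations'' that convert these value-to-go bounds into the cumulative, undiscounted regret of \Cref{eqn:def_regret}. I would write the regret as a telescoping sum along the algorithm's own random budget trajectory $\{\hat B_t\}$, so that round $t$ contributes the one-step Bellman suboptimality of the algorithm's bid $\hat b_t$; since $\hat b_t$ maximizes the estimated objective $Q_{v_t,\hat F}(\hat B_t,t,\cdot)$, that suboptimality is at most $2\sup_b|Q_{v_t,F}(\hat B_t,t,b)-Q_{v_t,\hat F}(\hat B_t,t,b)|$, and expanding $Q$ and using the DKW bound together with the previous step makes this $\tilde O(t^{-1/2})$ with leading constant of order $C+C_1$. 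Summing over $t$ via $\sum_{t=1}^T t^{-1/2}\le 2\sqrt T-1$ produces the $\left(4\sqrt{\frac12\ln 2T^2}\,\frac{1+\lambda}{(1-\lambda)^2}+5-\lambda\right)\sqrt T$ contribution (with $C_1=1$, and the finite-lookahead slack $\lambda^{t_0-t}/(1-\lambda)<C_1/\sqrt t$ from \Cref{algoline:init} folded in), while the passage from the discounted objective that \Cref{algo:full_feedback} actually optimizes to the undiscounted metric --- controlled through $1-\lambda^{t-1}\le\min\{1,(t-1)(1-\lambda)\}$ --- leaves only the lower-order $\frac{\lambda}{2}\log_{1/\lambda}\frac{T}{(1-\lambda)^2}$ term. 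Adding the $+1$ from the bad event gives the stated bound.

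The step I expect to be the main obstacle is exactly this last reduction. Because $\pi$ and $\pi^*$ run down different random budget trajectories their reward streams cannot be subtracted term by term, so the comparison has to be routed through the optimal value-to-go $V(\hat B_t,t)$ evaluated along the algorithm's own trajectory --- a performance-difference style identity --- and \Cref{lem:final_v} must be applied at \emph{every} $t$, not just at $t=1$ (applying it only at $t=1$ yields an $\tilde O((1-\lambda)^{-3})$ bound that is independent of $T$ but never reaches the undiscounted metric). One must also keep track of the fact that the algorithm maximizes the objective built from the \emph{truncated} value function $V_{\hat F}$ rather than the true Bellman objective, which is why \Cref{lem:tildehat_diff} and \Cref{lem:vv_diff} have to be composed with compatible constants, so that every per-round error remains $\tilde O(t^{-1/2})$ and hence summable to $\tilde O(\sqrt T)$.
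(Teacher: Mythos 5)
Your proposal matches the paper's own proof in both structure and constants: the good/bad-event split with DKW at $\delta=1/T$ contributing the additive $+1$, the composition of \Cref{lem:tildehat_diff} and \Cref{lem:vv_diff} into the $(C+C_1)t^{-1/2}$ value-function bound, the per-round Bellman-suboptimality decomposition along the algorithm's own budget trajectory (the paper's auxiliary quantities $R_1$ and $R_2$, with \Cref{lem:final_v} applied at every $t$), and the discounted-to-undiscounted correction producing the $\frac{\lambda}{2}\log_{1/\lambda}\frac{T}{(1-\lambda)^2}$ term. This is essentially the same argument as the paper's, and the obstacle you flag (routing the comparison through $V(\hat B_t,t)$ because the two policies follow different budget trajectories) is exactly the one the paper's $R_2$-based transformation is designed to handle.
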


Note that the regret bound is meaningful only if $B = \omega(T^{\frac{1}{2}})$. Let us now take the budget $B$ to scale linearly with $T$ as in \cite{balseiro2019learning, flajolet2017real}. Specifically, assume that $T<+\infty$ and there exists a constant $\beta$ such that the budget $B = \beta T$, then we establish that the regret is $\tilde O(\sqrt T)$ in this special case. Indeed, under this condition, we can simply set $t_0=T+1$ and $V_{\hat F}(B_{T+1},T+1)=0$ for any $B_{T+1}$ in \Cref{algo:full_feedback}. Therefore, $C_1=0$ and the worst-case regret is bounded by $\left(4\sqrt{\frac{1}{2} \ln 2T^2} \frac{1 + \lambda}{(1 - \lambda)^2}+1-\lambda \right) \sqrt T + \frac{ \lambda}{2} \log_\frac{1}{\lambda} \frac{T}{(1 - \lambda)^2}+1$.

Next we deal with the case where $G$ is also initially unknown. Based on \Cref{algo:full_feedback}, we additionally maintain an estimation $\hat G$ of $G$ based on past observations of valuations. $\hat G$ is initialized to be a uniform distribution and will be used to solve the dynamic programming problem (see \Cref{algoline:estimate_G} of \Cref{algo:censored_feedback}). Using similar techniques as before (with more work), we obtain the following theorem.
\begin{theorem}\label{cor:unknownG}
    Under the circumstance that $F, G$ are both unknown, it holds that the worst-case regret of \Cref{algo:full_feedback} using empirical distribution functions to estimate $F$ and $G$ is $\tilde{O}(\sqrt{T})$. Explicitly, if we take $C_1=1$,
    \[ \text{Regret} \leq \left(\sqrt{\frac{1}{2} \ln 4T^2} \frac{6(1 + \lambda)}{(1 - \lambda)^2}+5-\lambda \right) \sqrt{T}+\frac{\lambda}{2} \log_\frac{1}{\lambda} \frac{T}{(1 - \lambda)^2}+1. \]
\end{theorem}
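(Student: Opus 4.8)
The plan is to retrace the three-step chain that establishes \Cref{thm:regret_full_feedback} --- (i) uniform convergence of the empirical estimators, (ii) propagation of the estimation error through the Bellman recursion to bound $|V - V_{\hat F,\hat G}|$, and (iii) conversion of this value-function error into regret --- while accounting for the single new source of error: the outer expectation $\E_{v\sim G}$ in the recursion defining $V_{\hat F}$ is now taken under $\hat G$ instead (see \Cref{algoline:estimate_G} of \Cref{algo:censored_feedback}), using $t-1$ (resp.\ $\Theta(t)$) past samples of valuations at round $t$.

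First I would apply \Cref{lem:DKW} to both the empirical CDF of the $m_i$'s and that of the $v_i$'s. Splitting a failure budget of $\delta/(2T)$ between the two estimators at each fixed round and union-bounding over all $T$ rounds with $\delta=1/T$, one obtains a good event of probability at least $1-1/T$ on which $\|F-\hat F_t\|_\infty\le \kappa\,t^{-1/2}$ and $\|G-\hat G_t\|_\infty\le \kappa\,t^{-1/2}$ simultaneously for every $t$, with $\kappa=\sqrt{\tfrac12\ln 4T^2}$; this already explains the $\ln 4T^2$ in the statement. Condition on this event. Next, re-derive the analogue of \Cref{lem:vv_diff} for $\varepsilon_\tau:=\sup_{B_\tau}|V(B_\tau,\tau)-V_{\hat F,\hat G}(B_\tau,\tau)|$. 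One Bellman step now contributes (a) the error from replacing $F$ by $\hat F$ inside the $Q$-function, (b) the inductive error $\varepsilon_{\tau+1}$ from the future value, and (c) the error from replacing $\E_G$ by $\E_{\hat G}$ in the outer expectation. For (a)+(b) the manipulation of \Cref{lem:vv_diff} applies verbatim: because $\hat F(b)$ and $1-\hat F(b)$ sum to one, the future value enters as a convex combination, so $\varepsilon_{\tau+1}$ is carried with coefficient $\lambda$ (not $2\lambda$), the coefficient errors contribute $\tfrac{2\lambda}{1-\lambda}\|F-\hat F\|_\infty$ and the value-of-winning term contributes $\|F-\hat F\|_\infty$, for a combined $\tfrac{1+\lambda}{1-\lambda}\kappa\,t^{-1/2}$. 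For (c) one observes that $v\mapsto \max_b Q_{v,\hat F}(B_\tau,\tau,b)$ is nondecreasing and $1$-Lipschitz in $v$ (since $\partial_v Q_{v,\hat F}=\hat F(b)\in[0,1]$ and the remaining terms do not depend on $v$), hence of total variation at most $1$ on $[0,1]$, so integration by parts gives $|\E_{\hat G}[\cdot]-\E_G[\cdot]|\le\|\hat G-G\|_\infty\le\kappa\,t^{-1/2}$; the argmax being taken under $\hat F,\hat G$ is harmless since we only use $|\max f-\max g|\le\sup|f-g|$. Altogether $\varepsilon_\tau\le\lambda\varepsilon_{\tau+1}+\tfrac{2\kappa}{1-\lambda}t^{-1/2}$, which telescopes to $\varepsilon_t\le\tfrac{2\kappa}{(1-\lambda)^2}t^{-1/2}$ plus the horizon-truncation term; the bound of \Cref{lem:tildehat_diff} is unchanged (it never used accuracy of $F$ or $G$), so by the choice of $t_0$ in \Cref{algoline:init} this adds only $C_1 t^{-1/2}$.

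Then I would run the argument of \Cref{lem:final_v} with $V_{\hat F,\hat G}$ in place of $V_{\hat F}$: comparing the executed policy (which plays the $(\hat F,\hat G)$-optimal bid, while its realized win and next valuation are drawn from the true $F$ and $G$) with $\pi^*$ picks up, per round, the same $\|F-\hat F_t\|_\infty+\|G-\hat G_t\|_\infty\le 2\kappa\,t^{-1/2}$ overhead, and the same extra $\tfrac1{1-\lambda}$ factor for the propagated budget mismatch. Summing the per-round bounds with $\sum_{t=1}^T t^{-1/2}\le 2\sqrt T$, adding the contribution of the bad event (at most one unit of utility per round, probability $\le 1/T$, hence $O(1)$ total) and the same lower-order $O(\log_{1/\lambda}T)$ tail from the rounds with $t\le t_0$ as in \Cref{thm:regret_full_feedback}, the arithmetic --- carried out exactly as there but with the enlarged per-step constant --- yields $\text{Regret}\le \sqrt{\tfrac12\ln 4T^2}\,\tfrac{6(1+\lambda)}{(1-\lambda)^2}\sqrt T+(5-\lambda)\sqrt T+\tfrac{\lambda}{2}\log_{1/\lambda}\tfrac{T}{(1-\lambda)^2}+1$, which is $\tilde O(\sqrt T)$.

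The main obstacle is step (ii)'s treatment of the $\E_{\hat G}$-for-$\E_G$ swap: \Cref{lem:DKW} controls only $\|\hat G-G\|_\infty$, not a total-variation distance, so one genuinely needs the monotone/Lipschitz structure of $v\mapsto\max_b Q_{v,\hat F}(B_\tau,\tau,b)$ to convert that sup-norm bound into a bound on the expectation gap, and one must verify that this new term, once injected into the recursion, still aggregates only geometrically --- which it does, thanks to the convex-combination observation already underlying \Cref{lem:vv_diff}. A secondary bookkeeping nuisance is that $\hat b_t$ now depends on $\hat G$ through the estimated future values, but this never enters the error bounds beyond the $|\max f-\max g|\le\sup|f-g|$ step.
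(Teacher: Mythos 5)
Your proposal is correct and follows essentially the same route as the paper: DKW applied to both empirical CDFs with a union bound, backward induction propagating the estimation error through the Bellman recursion where the $\E_{\hat G}$-for-$\E_G$ swap is controlled exactly as in the paper's Lemma~\ref{lem:unknownG_diff_vv} (integration by parts using monotonicity of $v\mapsto Q$, though the paper uses the cruder bound $\tfrac{1}{1-\lambda}$ on the total variation rather than your envelope-theorem Lipschitz constant of $1$), followed by the same $R_1 \to R_2 \to$ Regret conversion and bad-event accounting. The constant bookkeeping differs slightly in intermediate steps but the decomposition and all key ideas match.
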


\subsection{Censored Feedback}\label{subsect:censored}
In this subsection, we deal with the case that the buyer can only see the winner's bid after each round. In other words, the feedback is left-censored. Concretely, the buyer's observation is
\[ o_t = \max\{b_t, m_t\}. \]
When she wins, the exact value of $m_t$ is not revealed. The buyer only knows that $m_t$ is smaller than her bid in the current round. To estimate the distribution of $m_t$, there is a classical statistics (KM estimator) developed by \citet{kaplan1958nonparametric} for the estimation of $F$ in this scenario. However, the KM estimator assumes the sequence $(m_t)_{t=1}^T$ is deterministic, which does not fit our needs. Although \citet{suzukawa2004unbiased}'s extension allows random censorship, it requires independence between $b_t$ and $m_t$, which is not realistic since we use the estimated distribution to place bids.

To tackle this problem, we first introduce an estimator proposed by \citet{zeng2004estimating} denoted by $\hat F_n$ to take place of the previous empirical distribution used in \Cref{algo:full_feedback}.

\paragraph{Estimation procedure}
We now present the procedure for estimating $F$ under censored feedback. It suffices to estimate the distribution function of $1-m_t$ which is right-censored by $1-b_t$. Let $y_t = \min\{1-m_t, 1-b_t\}, r_t = \mathbf 1_{\{m_t \geq b_t\}}$. The observations can now be described as $(y_t, r_t, \bm h_t)_{t=1}^T$. 

Roughly speaking, to decouple the dependency between $m_t, b_t$, we use the fact that $b_t$ and $m_t$ are independent conditioning on $\bm h_t$. Intuitively, the history $\bm h_t$ provides information for getting enough effective samples for $m_t$. Next we establish models to estimate the hazard rate functions\footnote{The hazard rate function of a random variable $X$ with p.d.f. $f$ and c.d.f. $F$ is $H_X(x) = \frac{f(x)}{1-F(x)}$.} of $1-m_t, 1-b_t$ using $\bm h_t$. With the hazard rate functions, we use the maximum likelihood method with a kernel to compute the final estimation $\hat F_t$ and obtain \Cref{eqn:zeng}.

Details follow. We initialize with a sequence (bandwidth) $(a_t)_{t=1}^T$ such that $\frac{\log^2a_t}{ta_t^2} \to 0, ta_t^2 \to \infty, ta_t^4 \to 0$ as $t \to +\infty$ and a symmetric kernel function $K(\cdot, \cdot) \in C^2(\mathbb R^2)$ with bounded gradient. Now, at each time $t$, we compute two vectors $\bm \beta_t, \bm \gamma_t$ which maximize each of the followings
\begin{equation}\label{eqn:cox_likelihood}
    \begin{split}
        \mathcal L_1(\bm \beta) &= \frac{1}{t} \sum_{\tau=1}^t r_\tau \left[\bm \beta^\top \bm h_\tau - \log \left(\sum_{y_i \geq y_\tau} e^{\bm \beta^\top \bm h_\tau}\right) \right], \\
        \mathcal L_2(\bm \gamma) &= \frac{1}{t} \sum_{\tau=1}^t (1-r_\tau) \left[\bm \gamma^\top \bm h_\tau - \log \left(\sum_{y_i \geq y_\tau} e^{\bm \gamma^\top \bm h_\tau}\right) \right].
    \end{split}
\end{equation}
We arbitrarily pad $\bm h_1, \dotsc, \bm h_t$ with zeros to make their length the same (we will show that this is without loss of generality in a moment). Compute $\bm Z_t=(\bm \beta_t^\top \bm h_t, \bm \gamma_t^\top \bm h_t)^\top$. The survival function of $1-m_t$, or equivalently the cumulative distribution function of $m_t$, is estimated based on \citet{zeng2004estimating}'s estimator
\begin{equation}\label{eqn:zeng}
    \hat F_t(x)=\frac{1}{t} \sum_{i=1}^t \prod_{j=1}^t \left(1 - \frac{K((\bm Z_i - \bm Z_j)/a_n) \mathbf 1_{\{y_j \leq x\}} r_j}{\sum_{m=1}^n K((\bm Z_i - \bm Z_m)/a_n) \mathbf 1_{\{y_j \leq y_m\}}} \right).
\end{equation}

Now, we are ready to apply the estimator and the algorithm for censored feedback is depicted in \Cref{algo:censored_feedback}. Note that the new estimator's convergence rate is slower than that for the full-feedback case. Therefore, compared to \Cref{algo:full_feedback}, \Cref{algo:censored_feedback} is now a multi-phase algorithm. The algorithm only updates the estimation of $\hat F$ at the end of each phase (see \Cref{fig:scheme} for an illustration). The other elements of each phase in \Cref{algo:censored_feedback} are similar to \Cref{algo:full_feedback}.

\begin{algorithm}[htbp]
    \caption{Algorithm for the censored-feedback case}
    \label{algo:censored_feedback}
    \begin{algorithmic}[1]
        \State \textbf{Input}: Initial budget $B$ and constant $C_1$ \Comment{$C_1$ is an arbitrary positive constant}
        \State Initialize the estimation $\hat{F}$ of $F$ and the estimation $\hat G$ of $G$ to uniform distributions over $[0, 1]$
        \State $B_1 \gets B$
        \For{Phase $i=1, 2, \dotsc$} \Comment{Phase $i\,(i>1)$ lasts for $2^i$ rounds. Phase 1 lasts for 2 rounds}
            \For{each $t$ in the time interval of round $i$}
                \State Observe the value $v_t$ in round $t$
                \State Update $\hat{G}(x) = \frac{1}{t} \sum_{i=1}^t \mathbf 1_{\{v_i \leq x\}}$. \label{algoline:estimate_G}
                \State Let $t_0$ be the smallest integer that satisfies $\lambda^{t_0-t} \frac{1}{1 - \lambda} < \frac{C_1}{\sqrt t}$
                \State Set $V_{\hat F, \hat G}(B_{t_0}, t_0)=0$ for any $B_{t_0}$ \Comment{$V_{\hat F, \hat G}$ is algorithm's estimation of $V$}
                \For{$\tau=t_0, t_0-1, \dotsc, t$} \Comment{This loop can be moved to the end of each phase to reduce the invocation time from $T$ to $\ln T$}
                    \State $Q_{\hat{F}, \hat G}(B_{\tau}, \tau, b) \gets [(v-b) + \lambda V_{\hat F, \hat G}(B_{\tau}-b, \tau+1)] \hat{F}(b) + \lambda V_{\hat F, \hat G}(B_{\tau}, \tau+1)(1 - \hat{F}(b))$
                    \State Solve the optimization problem $\hat b_{\tau}^* \gets \argmax_b Q_{\hat{F}, \hat G}(B_{\tau}, \tau, b)$
                    \State $V_{\hat F, \hat G}(B_{\tau}, \tau) \gets \E_{v \sim G} [Q_{\hat{F}, \hat G}(B_{\tau}, \tau, \hat b_{\tau}^*)]$
                \EndFor
                \State Place a bid $\hat b_t \gets \argmax_b Q_{\hat{F}, \hat G}(B_t, t, b)$
                \State Observe $o_t, c_t$ and $r_t$ from this round of auction
                \State $B_{t+1} \gets B_t-c_t$. If $B_{t+1} \leq 0$ then halt.
            \EndFor
            \State Update $\hat{F}$ using the estimator specified in \Cref{eqn:zeng} with data observed before this phase
        \EndFor
    \end{algorithmic}
\end{algorithm}

\begin{figure}[htbp]
    \centering
    \begin{tikzpicture}[scale=1.25]
    \footnotesize
    \draw (-1,0) rectangle ++(0.5,0.25) {};
    \draw[fill=gray!30, draw=none] (-0.5,0) rectangle ++(0.25,0.25) {}; \draw (-0.5,0) rectangle ++(0.5,0.25) {};
    \draw[fill=gray!30, draw=none] (0,0) rectangle ++(0.25,0.25) {}; \draw (0,0) rectangle ++(1,0.25) {};
    \draw[fill=gray!30, draw=none] (1,0) rectangle ++(0.25,0.25) {}; \draw (1,0) rectangle ++(2,0.25) {};
    \draw[fill=gray!30, draw=none] (3,0) rectangle ++(0.25,0.25) {}; \draw (3,0) rectangle ++(4,0.25) {};

    \node (4) at (5,0.5) {Estimate $F$ using \Cref{eqn:zeng}};
    \node (3) at (3,-1) {Estimate $F$ using \Cref{eqn:zeng}};
    \node (2) at (2,0.85) {Estimate $F$ using \Cref{eqn:zeng}};

    \draw[-{Stealth[angle'=45]}] (3.125,0.125) |- (4);
    \draw[-{Stealth[angle'=45]}] (1.125,0.125) |- (3);
    \draw[-{Stealth[angle'=45]}] (0.125,0.125) |- (2);

    \node at (5,-0.4) {\begin{tabular}{c} Phase $k$ \\ Length $2^{k-1}$ \end{tabular}};
    \node at (3,-0.4) {\begin{tabular}{c} Phase $k-1$ \\ Length $2^{k-2}$ \end{tabular}};
    \node at (-0.75,-0.4) {\begin{tabular}{c} Phase 1 \\ Length 2 \end{tabular}};

    \node at (7.25,0.125) {$\bm \dotsb$};
\end{tikzpicture}
    \caption{Schematic representation of the phases in \Cref{algo:censored_feedback}. \Cref{algo:censored_feedback} updates its estimates of $F$ at the end of each phase.}
    \label{fig:scheme}
\end{figure}
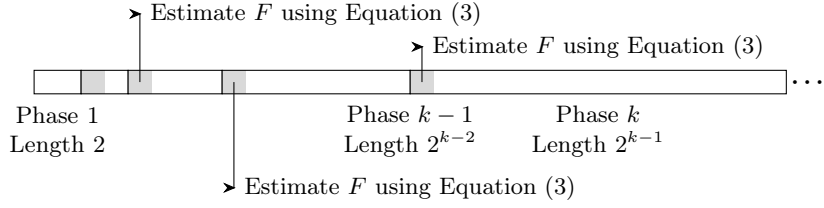

\paragraph{Analysis of regret}
To analyze the performance of \Cref{algo:censored_feedback}, we will prove a series of lemmas on the estimation error of \Cref{eqn:zeng}. In particular, our proof relies on the following convergence result.
\begin{lemma}[Zeng]\label{lem:zeng}
    Let $\hat F_n$ be the estimation of $F$ after using $n$ observations. We have
    \[ \sqrt{n}(\hat{F}_n(1-x)-F(1-x)) \Longrightarrow \mathcal{B}(x) \quad \text{in} \quad \ell^\infty([0,1]), \]
    where $\mathcal{B}(x)$ is a Gaussian process.
\end{lemma}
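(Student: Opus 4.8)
The estimator $\hat F_n$ in \Cref{eqn:zeng} is precisely the covariate-adjusted product-limit estimator of \citet{zeng2004estimating} for a marginal survival function under dependent censoring, so the plan is to invoke that paper's weak-convergence theorem after recasting our situation in survival-analysis language and verifying its hypotheses. For each round $\tau$, set the ``failure time'' $S_\tau = 1-m_\tau$, the ``censoring time'' $C_\tau = 1-b_\tau$, the observable $y_\tau = \min\{S_\tau, C_\tau\}$ with event indicator $r_\tau = \mathbf 1_{\{S_\tau \le C_\tau\}} = \mathbf 1_{\{m_\tau \ge b_\tau\}}$, and covariate vector $\bm h_\tau$. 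Estimating $F$ (the c.d.f.\ of $m_\tau$) is identical to estimating the survival function $x \mapsto \mathbb P(S_\tau \ge x) = F(1-x)$, which is exactly the target of $\hat F_n(1-x)$. Zeng's theorem then asserts that, under his regularity conditions, $\sqrt n(\hat F_n(\cdot) - F(\cdot))$ converges weakly in $\ell^\infty([0,1])$ to a tight mean-zero Gaussian process; the whole burden is to check these conditions in our model.

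I would verify three groups of conditions. First, \emph{proportional-hazards structure}: the vectors $\bm\beta_t,\bm\gamma_t$ maximizing \Cref{eqn:cox_likelihood} are the Cox partial-likelihood estimators for the hazards of $1-m_\tau$ and $1-b_\tau$ given $\bm h_\tau$, and standard partial-likelihood asymptotics give their $\sqrt t$-consistency and asymptotic linearity, which propagate to the one-dimensional linear predictors forming $\bm Z_\tau=(\bm\beta_t^\top\bm h_\tau,\bm\gamma_t^\top\bm h_\tau)^\top$; the growing length of $\bm h_\tau$ is harmless because zero-padding leaves each inner product $\bm\beta^\top\bm h_\tau$, $\bm\gamma^\top\bm h_\tau$ unchanged, the kernel smoothing in \Cref{eqn:zeng} always operates on the fixed two-dimensional $\bm Z_\tau$, and one may therefore reduce to a fixed covariate dimension for the partial-likelihood step. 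Second, \emph{conditional independent censoring}: since $m_\tau$ is drawn i.i.d.\ and is independent of $v_\tau$ and of the past, while $b_\tau=\pi(v_\tau,B_\tau,\bm h_\tau)$ with $B_\tau$ being $\bm h_\tau$-measurable and $v_\tau$ still independent of $\bm h_\tau$, we get $S_\tau \perp C_\tau \mid \bm h_\tau$ — exactly the decoupling Zeng's construction exploits, and the reason the raw Kaplan--Meier or Suzukawa estimators do not suffice. Third, \emph{kernel and bandwidth conditions}: these are built into the algorithm, which fixes a symmetric $K\in C^2(\mathbb R^2)$ with bounded gradient and a bandwidth $(a_t)$ with $\log^2 a_t/(t a_t^2)\to 0$, $t a_t^2\to\infty$, $t a_t^4\to 0$ (undersmoothing to annihilate the kernel bias at rate $o(n^{-1/2})$ while keeping the variance of the kernel weights controlled), and one also needs mild continuity of the conditional distributions for the limiting hazards to be well defined.

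The step I expect to be the main obstacle is that Zeng's theorem is stated for i.i.d.\ observations, whereas here the tuples $(y_\tau,r_\tau,\bm h_\tau)$ are \emph{not} i.i.d.: $b_\tau$ depends on the running budget $B_\tau$ and on the current estimate $\hat F$, both measurable functions of the past sample, so the observations form an adapted, non-stationary sequence. To close this gap I would re-run Zeng's argument in the martingale/empirical-process framework for adapted data: the counting processes $\tau\mapsto\mathbf 1_{\{y_\tau\le x,\ r_\tau=1\}}$ together with their at-risk processes have the correct compensators \emph{conditionally on $\bm h_\tau$}, so the normalized partial sums appearing in the linearization of $\hat F_n$ are martingale transforms to which a functional martingale CLT (Rebolledo-type) applies, while uniform consistency of the kernel-smoothed hazards follows from a conditional Glivenko--Cantelli bound using that the indicator/kernel classes indexed by $x$ and $\bm z$ are of VC type. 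This substitutes the i.i.d.\ empirical-process inputs of Zeng by their adapted counterparts, the bias/variance accounting governed by the bandwidth conditions above being unchanged. I would close by noting that the resulting limit $\mathcal B(x)$ is a mean-zero Gaussian process whose covariance kernel is read off from Zeng's influence-function expansion; for the downstream regret analysis only the $\sqrt n$ rate and tightness of the limit matter, since these already yield $\sup_{x\in[0,1]}|\hat F_n(1-x)-F(1-x)| = O_p(n^{-1/2})$.
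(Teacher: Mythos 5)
Your overall strategy matches the paper's: \Cref{lem:zeng} is not proved from scratch but imported from \citet{zeng2004estimating}, and the entire content of the paper's argument is the verification of that theorem's hypotheses, which appears in the main text immediately after the lemma statement. You correctly identify and verify the conditional-independence requirement ($1-m_t \perp 1-b_t$ given $\bm h_t$, because $b_t$ is $\bm h_t$-measurable while $m_t$ is independent of the past), and your observation that the tuples $(y_\tau, r_\tau, \bm h_\tau)$ are adapted rather than i.i.d.\ is a legitimate concern that the paper glosses over; your proposed martingale/Rebolledo-type repair goes beyond what the paper writes down.

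There is, however, one concrete gap in your verification of the proportional-hazards condition, which is the substantive check the paper actually performs. Zeng's estimator is consistent provided \emph{at least one} of the two conditional hazards --- that of the failure time $1-m_t$ or of the censoring time $1-b_t$ given $\bm h_t$ --- \emph{exactly} follows Cox's model $H(y \mid \bm h_t) = f(y)e^{\bm\beta^\top \bm h_t}$. You instead assert generic $\sqrt t$-consistency of both partial-likelihood estimators, which is not what is needed (under misspecification they converge only to pseudo-true values, and nothing guarantees either model is correct for $1-b_t$, whose conditional law is a point mass determined by the algorithm). The paper's verification is that $1-m_t$ satisfies the Cox model with $\bm\beta = \mathbf 0$ and baseline hazard $f(y) = F'(1-y)/(1-F(1-y))$, precisely because $m_t$ is i.i.d.\ and independent of $\bm h_t$. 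This also invalidates your justification for the zero-padding: padding $\bm h_1, \dotsc, \bm h_t$ to a common length does \emph{not} leave the inner products unchanged, since later rounds carry genuine history in the coordinates that are zero for early rounds; what makes the padding harmless is that the true coefficient vector for the $m$-model is $\mathbf 0$, so consistency holds regardless of how the covariates are embedded. You should replace your first group of conditions with this explicit identification of a correctly specified model.
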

Before we proceed to apply the lemma, we verify a series of prerequisites mentioned in \cite{zeng2004estimating} to make sure it holds. First, we make sure that conditioning on $\bm h_t$, the random variables $1-m_t$ and $1-b_t$ are independent. Indeed, $b_t$ is completely decided by $\bm h_t$ and $m_t$ is independent of $\bm h_t$. Second, we note that the maximizer shown in \Cref{eqn:cox_likelihood} is essentially doing Cox's proportional hazards regression analysis. To establish consistency of the estimator, we show that at least one of $\tilde m_t \coloneqq 1-m_t$ and $1-b_t$ follows Cox's proportional model. That is to say, there exists $\bm \beta$ and a function $f(y)$ such that the hazard rate function of $\tilde m_t$ or $1-b_t$ conditioning on $\bm h_t$ exactly follows
\begin{equation}\label{eqn:cox}
    H(y \mid \bm h_t)=f(y)e^{\bm \beta^\top \bm h_t}.
\end{equation}
\Cref{eqn:cox} holds for $\tilde m_t$. In fact, taking $\bm \beta = \mathbf 0$ and $f(y) = \frac{F'(1-y)}{1-F(1-y)}$ suffices. Since we take $\bm \beta = \mathbf 0$, consistency establishes regardless of the way we pad $\bm h_t$.

Next, consider some phase at time $2^n\le t \le 2^{n+1}-1$. The estimation $\hat F_n$ is computed using the first $2^n$ observed data points. Applying similar techniques for the rate of convergence of empirical process \citep{norvaivsa1991rate}, we obtain
\begin{lemma}\label{lem:convergencerate}
    Under \Cref{algo:censored_feedback}, let $\hat F_n$ be an empirical process of updates and $\mathcal{B}$ be a general Gaussian process, respectively, indexed by a class $\mathscr{F}$ of real measurable functions. We have:
    \[ |{\Pr(\{\|F_n\|_{\mathscr{F}}\ge r\})-\Pr(\{\|\mathcal{B}\|_{\mathscr{F}}\ge r\})}|\le C_2(1+r)^{-3}\ln^2t \cdot t^{-\frac{1}{6}}, \]
    where $C_2$ is a constant depending only on $\hat F_n$ and $t$ is the size of data used to update the estimation.
\end{lemma}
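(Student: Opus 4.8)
The plan is to promote the weak convergence of \Cref{lem:zeng} to a Berry--Esseen-type rate, imitating the quantitative empirical-CLT technique of \citet{norvaivsa1991rate}. I would first record that, by the analysis of \citet{zeng2004estimating} underlying \Cref{lem:zeng}, the estimator is asymptotically linear: writing $O_i=(y_i,r_i,\bm h_i)$ for the data used to form $\hat F_n$,
\[ \sqrt n\bigl(\hat F_n(1-x)-F(1-x)\bigr)=\frac{1}{\sqrt n}\sum_{i=1}^n\bigl(\psi_x(O_i)-\E[\psi_x]\bigr)+R_n(x)=:\nu_n(x)+R_n(x), \]
where $\psi_x$ is a bounded influence function (hence with uniformly bounded third moment in $x$) and $R_n$ is the nonlinear remainder produced by the kernel-smoothing step in \Cref{eqn:zeng}. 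The centered process $\nu_n$ is a genuine empirical process indexed by the class $\mathscr F=\{\mathbf 1_{\{\cdot\le x\}}:x\in[0,1]\}$, a uniformly bounded VC class with finite uniform entropy integral, whose covariance matches that of the limit $\mathcal B$.

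For the linear part I would invoke a quantitative CLT comparing $\|\nu_n\|_{\mathscr F}$ with $\|\mathcal B\|_{\mathscr F}$: a Gaussian coupling for $\nu_n$ over $\mathscr F$ (a KMT-type strong approximation, whose rate is $n^{-1/6}$ up to logs for a general such class) combined with a smoothing inequality that converts the coupling error into a gap between distribution functions gives
\[ \bigl|\Pr(\|\nu_n\|_{\mathscr F}\ge r)-\Pr(\|\mathcal B\|_{\mathscr F}\ge r)\bigr|\le C\,(1+r)^{-3}\ln^2 n\cdot n^{-1/6}. \]
The $(1+r)^{-3}$ is exactly what the smoothing inequality yields once the discrepancy is controlled through a third absolute moment of $\psi_x$, the $\ln^2 n$ is the chaining/entropy overhead for the supremum over $\mathscr F$, and $n^{-1/6}$ is the coupling rate. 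It then remains to check that $\sup_x|R_n(x)|$ is negligible at this scale. The bandwidth conditions $\log^2 a_n/(na_n^2)\to0$, $na_n^2\to\infty$, $na_n^4\to0$ force the kernel bias to be $O(a_n^2)$ and the stochastic part of the smoothing to be $O_p(\sqrt{a_n})$; with $a_n\asymp n^{-1/3}$ both are $O(n^{-1/6}\,\mathrm{polylog})$, so $R_n$ only enlarges $C$ and the power of the logarithm. Relabelling the sample size as $t$ (which is the ``$t$'' of the statement) and absorbing constants into $C_2$ then gives the bound as stated.

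The main obstacle is the remainder $R_n$. Unlike the linear part it is genuinely nonlinear --- it involves ratios of kernel-weighted sums over the covariate values $\bm Z_i=(\bm\beta_n^\top\bm h_i,\bm\gamma_n^\top\bm h_i)^\top$ --- so bounding $\sup_x|R_n(x)|$ with a polynomial-in-$n$ rate uniformly in $x$ requires upgrading the U-statistic and empirical-process estimates that \citet{zeng2004estimating} uses only for weak convergence into explicit rates, and in addition controlling the estimation error of $\bm\beta_n,\bm\gamma_n$ coming from the Cox partial likelihoods \Cref{eqn:cox_likelihood} (the choice $\bm\beta=\mathbf 0$ for $\tilde m_t$, already noted in the text, is what keeps this tractable). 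A secondary subtlety is that the Gaussian coupling must hold uniformly over $\mathscr F$ rather than pointwise, which is precisely why the classical Berry--Esseen rate $n^{-1/2}$ degrades to $n^{-1/6}$.
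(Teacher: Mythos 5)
Your reconstruction matches the paper's intended argument: the paper offers no proof of this lemma at all beyond the phrase ``applying similar techniques for the rate of convergence of empirical process \citep{norvaivsa1991rate}'', so the bound $(1+r)^{-3}\ln^2 t\cdot t^{-1/6}$ is simply the Norvai\v{s}a--Paulauskas rate for the CLT of empirical processes indexed by a function class, asserted to carry over to $\hat F_n$. Your decomposition into an asymptotically linear part $\nu_n$ (a genuine empirical process over the VC class of indicators, to which the cited rate applies directly) plus a nonlinear remainder $R_n$ from the kernel-smoothed product-limit construction in \Cref{eqn:zeng} is exactly the structure one would need to make the citation legitimate, and your identification of where each factor in the bound comes from is accurate.

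The obstacle you flag is genuine and is not resolved by the paper either: Zeng's estimator is not an empirical process, and \citet{norvaivsa1991rate} applies only to sums of centered i.i.d.\ terms over a function class. Upgrading Zeng's weak-convergence analysis to show $\sup_x|R_n(x)|=O(n^{-1/6}\,\mathrm{polylog})$ --- including explicit rates for the Cox partial-likelihood estimates $\bm\beta_n,\bm\gamma_n$ and for the U-statistic terms in the kernel weights --- is the substantive missing step, and the paper supplies no argument for it. Your proposal is therefore at least as complete as the paper's treatment; the honest conclusion is that the lemma as stated rests on an unverified transfer of the empirical-process rate to a nonlinear functional, which both you and the paper leave open.
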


Synthesizing the convergence results in \Cref{lem:zeng} and \Cref{lem:convergencerate}, we establish the following lemma on the performance of $\hat{F}$ in \Cref{algo:censored_feedback}.

\begin{lemma}\label{lem:TSN}
    Under the update process in \Cref{algo:censored_feedback}, for any $2^n\le t\le 2^{n+1}-1$, we have the following bounds for the estimation $\hat F_n$:
    \[ |{\Pr(\sup_b|\sqrt{2^n}(\hat{F}_n(1-b)-F(1-b))|\ge r)-\Pr(\sup_b|\mathcal{B}(1-b)|\ge r)}|\le M (1+r)^{-3}\ln^2 t\cdot t^{-\frac{1}{6}}, \]
    where $M$ is a constant depending only on $F$ and \Cref{algo:censored_feedback}.
\end{lemma}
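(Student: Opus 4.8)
The plan is to realize $\hat F_n$ as an empirical process over a fixed Donsker class, invoke \Cref{lem:convergencerate} for that class, and then trade the effective sample size $2^n$ for the round index $t$ at the cost of an absolute constant.

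First I would fix the index class $\mathscr F := \{\, x \mapsto \mathbf 1_{\{x \le 1-b\}} : b \in [0,1]\,\}$, the reflected family of one-sided half-line indicators. This is a VC-subgraph class of index one and hence universally Donsker, which is precisely the regularity underlying \Cref{lem:zeng}. With this choice, evaluating the scaled process $F_n := \sqrt{2^n}\,(\hat F_n - F)$ and its Gaussian limit $\mathcal B$ on $\mathscr F$ yields, after the change of variable $x = 1-b$ and using right-continuity to pass from a countable dense set of values of $b$ to all of $[0,1]$,
\[
\|F_n\|_{\mathscr F} = \sup_{b\in[0,1]} \bigl|\sqrt{2^n}\,(\hat F_n(1-b) - F(1-b))\bigr|,
\qquad
\|\mathcal B\|_{\mathscr F} = \sup_{b\in[0,1]} |\mathcal B(1-b)|.
\]
I would also record that in the $n$-th phase the estimator $\hat F_n$ used by \Cref{algo:censored_feedback} is built from the data of all previous phases, whose number is $\Theta(2^n)$ and in particular lies within a fixed ratio of the current round $t\in[2^n,2^{n+1}-1]$; this is the only role played by the phase doubling.

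Next I would apply \Cref{lem:convergencerate} to $F_n$ and $\mathcal B$ with the class $\mathscr F$ and data size $2^n$, giving
\[
\bigl|\Pr(\|F_n\|_{\mathscr F}\ge r) - \Pr(\|\mathcal B\|_{\mathscr F}\ge r)\bigr| \le C_2\,(1+r)^{-3}\ln^2(2^n)\,(2^n)^{-1/6}.
\]
The constant $C_2$ is governed by the moment and covariance structure of the estimator and of its Gaussian limit, which here are determined solely by $F$ and the fixed ingredients of \Cref{algo:censored_feedback} (the kernel $K$ and the bandwidth schedule $(a_t)$), so I would argue it can be bounded by one constant uniformly over all phases. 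Finally, since $t/2 \le 2^n \le t$ on $[2^n, 2^{n+1}-1]$, we have $\ln^2(2^n)\le \ln^2 t$ and $(2^n)^{-1/6}\le 2^{1/6}t^{-1/6}$, so the right-hand side is at most $2^{1/6}C_2\,(1+r)^{-3}\ln^2 t\cdot t^{-1/6}$; taking $M := 2^{1/6}C_2$ completes the proof.

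The step I expect to be the main obstacle is justifying the uniform-in-$n$ bound on the constant. \Cref{lem:convergencerate} states that $C_2$ ``depends only on $\hat F_n$'', and since $\hat F_n$ is random and changes from phase to phase one must unpack this into a genuine dependence on the underlying law — on $F$ and on the algorithm's fixed parameters — and verify that the rate-of-convergence machinery of \citet{norvaivsa1991rate} behind \Cref{lem:convergencerate} produces a constant that does not deteriorate as $n\to\infty$. The remaining ingredients — exhibiting the Donsker class, the change of variables, and the comparison of $2^n$ with $t$ — are routine.
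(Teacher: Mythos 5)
Your proposal is correct and follows essentially the same route as the paper: apply \Cref{lem:convergencerate} to the scaled process over the half-line indicator class with sample size $2^n$, then use $t/2\le 2^n\le t$ to replace $\ln^2(2^n)\,(2^n)^{-1/6}$ by $2^{1/6}\ln^2 t\cdot t^{-1/6}$ and set $M=2^{1/6}C_2$. The uniformity-in-$n$ of the constant that you flag as the main obstacle is likewise left implicit in the paper, which simply asserts that $M$ depends only on $F$ and the algorithm.
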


Finally, we now bound the difference between $\hat{F}_n$ and $F$ with the help of \Cref{lem:TSN}.
\begin{lemma}\label{lem:Fbound}
    Recall that we use the first $2^n$ data points to estimate $\hat F$. Under the update procedure of \Cref{algo:censored_feedback}, for any $2^n\le t\le 2^{n+1}-1$, with probability at least $1-T^{-\frac{5}{12}}/(2 \ln T)$
    \[\sup_x|\hat F(x)-F(x)|\le \sqrt{2}C_5(4M\ln^3T)^\frac{1}{3}t^{-\frac{5}{9}}T^{\frac{5}{36}}.
    \]
\end{lemma}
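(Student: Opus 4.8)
The plan is to build on the Berry--Esseen-type comparison of \Cref{lem:TSN} and then calibrate the threshold $r$ in a phase-dependent way so that the exponents close. First I would recast the quantity of interest in the form appearing in \Cref{lem:TSN}: writing $\hat F=\hat F_n$ for the estimate active throughout the phase and substituting $x=1-b$ (both $\hat F$ and $F$ are constant outside $[0,1]$, so the supremum over $\mathbb R$ equals that over $b\in[0,1]$), one has $\sup_x|\hat F(x)-F(x)|=\tfrac{1}{\sqrt{2^n}}\,\sup_b\bigl|\sqrt{2^n}\,(\hat F_n(1-b)-F(1-b))\bigr|$. Hence, for every $r>0$, \Cref{lem:TSN} gives
\[
\Pr\!\left(\sup_x|\hat F(x)-F(x)|\ge \tfrac{r}{\sqrt{2^n}}\right)\;\le\;\Pr\!\left(\sup_b|\mathcal B(1-b)|\ge r\right)+M(1+r)^{-3}\ln^2 t\cdot t^{-1/6},
\]
so it only remains to pick $r$ and control the Gaussian-process tail.

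Next I would handle the two terms. The limit process $\mathcal B$ of \Cref{lem:zeng} is a centered Gaussian process with a.s.\ bounded paths on $[0,1]$; with $\mu:=\E\sup_b|\mathcal B(1-b)|<\infty$ and $\sigma^2:=\sup_b\operatorname{Var}(\mathcal B(1-b))<\infty$ (constants depending only on $F$ and the parameters of \Cref{algo:censored_feedback}), the Borell--TIS inequality gives $\Pr(\sup_b|\mathcal B(1-b)|\ge r)\le\exp(-(r-\mu)^2/(2\sigma^2))$ for $r\ge\mu$. I then set $r=r_n:=(4M\ln^3 T)^{1/3}\,T^{5/36}\,t^{-1/18}$. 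Since $t\le T$, we have $r_n\ge(4M\ln^3 T)^{1/3}T^{1/12}$, which exceeds $2\mu$ once $T$ is above an absolute constant, so the Borell--TIS bound is $\exp(-\Omega(T^{1/6}))$, far below $\tfrac14 T^{-5/12}/\ln T$. For the second term, $r_n^{3}=4M\ln^3 T\cdot T^{5/12}\cdot t^{-1/6}$, so using $(1+r_n)^{-3}<r_n^{-3}$ and $\ln t\le\ln T$,
\[
M(1+r_n)^{-3}\ln^2 t\cdot t^{-1/6}\;\le\;\frac{\ln^2 t}{4\ln^3 T\cdot T^{5/12}}\;\le\;\frac{1}{4\ln T\cdot T^{5/12}}.
\]
Adding the two estimates, the failure probability is at most $T^{-5/12}/(2\ln T)$.

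Finally I would translate the good event into the stated deviation. On its complement, $\sup_b|\sqrt{2^n}(\hat F_n(1-b)-F(1-b))|<r_n$, i.e.\ $\sup_x|\hat F(x)-F(x)|<r_n/\sqrt{2^n}$; and since $t\le 2^{n+1}-1<2\cdot 2^n$ we have $1/\sqrt{2^n}<\sqrt2\,t^{-1/2}$, whence
\[
\sup_x|\hat F(x)-F(x)|\;<\;\sqrt2\,(4M\ln^3 T)^{1/3}\,T^{5/36}\,t^{-1/18}\,t^{-1/2}\;=\;\sqrt2\,(4M\ln^3 T)^{1/3}\,t^{-5/9}\,T^{5/36},
\]
using $\tfrac{1}{18}+\tfrac12=\tfrac59$; absorbing the remaining slack into an absolute constant $C_5$ gives exactly the claimed bound.

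The step I expect to be the main obstacle is the calibration in the second paragraph: the threshold $r$ must be taken \emph{phase-dependent} (the $t^{-1/18}$ factor) rather than uniform, since a uniform choice would only deliver a $t^{-1/2}$ rate instead of the sharper $t^{-5/9}$ one, and making the exponents $\tfrac{1}{18}+\tfrac12=\tfrac59$ and $\tfrac5{36}$ close requires carefully balancing the $(1+r)^{-3}$ factor of \Cref{lem:TSN} against the target failure probability $\approx T^{-5/12}$. A secondary subtlety is keeping ``$2^n$'' (the sample size, which sets the central-limit scaling $\sqrt{2^n}$) separate from ``$t$'' (the round index, which governs the Berry--Esseen error $t^{-1/6}$); these differ only by a bounded factor, but conflating them would scramble the constants and the final exponent.
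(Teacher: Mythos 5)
Your proposal is correct and follows essentially the same route as the paper: both apply the comparison of \Cref{lem:TSN}, bound the Gaussian-process tail by a sub-Gaussian inequality (your Borell--TIS step is exactly the content of the paper's auxiliary \Cref{lem:brownbridge}), calibrate $r=(4M\ln^3T)^{1/3}T^{5/36}t^{-1/18}$ so that the Berry--Esseen term is $\le \tfrac14 T^{-5/12}/\ln T$, and convert via $2^n>t/2$ to get the $\sqrt2\,t^{-5/9}T^{5/36}$ rate. The only cosmetic difference is that the paper sets $r=\max\{r_1,r_2\}$ and absorbs the maximum into $C_5$, whereas you observe directly that $r_2$ already dominates the Gaussian tail for large $T$; both are valid.
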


With \Cref{lem:Fbound} in hand and employing a similar methodology in the proof of \Cref{thm:regret_full_feedback}, we now have
\begin{theorem}\label{thm:censored_regret}
    Under the circumstance that $F, G$ are both unknown and the feedback is censored, the worst-case regret of \Cref{algo:censored_feedback} is $\tilde{O}(T^{\frac{7}{12}})$. Explicitly, if we take $C_1=1$,
    \begin{gather*}
        \text{Regret} \leq \left(\frac{9 \sqrt 2(1 + \lambda)}{2(1 - \lambda)^2}C_5(4M \ln^3T)^{\frac{1}{3}}+1\right)T^{\frac{7}{12}}\\+\left(\sqrt{\frac{1}{2}\ln{4T^{\frac{17}{12}}}}\frac{2(1+\lambda)}{(1-\lambda)^2}+5 -\lambda\right) \sqrt{T}+\frac{ \lambda}{2} \log_\frac{1}{\lambda} \frac{T}{(1 - \lambda)^2}.
    \end{gather*}
\end{theorem}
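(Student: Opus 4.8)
The plan is to mirror the proof architecture of \Cref{thm:regret_full_feedback} (and its extension \Cref{cor:unknownG}), but feeding in the weaker concentration guarantee \Cref{lem:Fbound} in place of the DKW bound \Cref{lem:DKW}, and carefully accounting for the phased update structure. The skeleton is: (i) a ``good event'' on which $\sup_x|\hat F_n(x)-F(x)|$ is small in every phase, with the bad event contributing only lower-order terms to the regret; (ii) on the good event, an analogue of \Cref{lem:vv_diff} bounding $|V(B_t,t)-V_{\hat F,\hat G}(B_t,t)|$ in terms of the current estimation error of $\hat F$ and $\hat G$; (iii) an analogue of \Cref{lem:tildehat_diff} controlling the truncation error from setting $V_{\hat F,\hat G}(\cdot,t_0)=0$, which is unchanged since it only uses the discount factor and contributes $C_1/\sqrt t$; (iv) an analogue of \Cref{lem:final_v} converting the value-function gap into the per-round regret via the two transformations (suboptimal bidding plus suboptimal budget spending); and (v) summation over $t=1,\dots,T$.

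The first step is to fix the good event. Phase $n$ uses the first $2^n$ data points; by \Cref{lem:Fbound}, for every $t$ in that phase $\sup_x|\hat F(x)-F(x)|\le \sqrt2\,C_5(4M\ln^3 T)^{1/3} t^{-5/9}T^{5/36}$ with probability at least $1-T^{-5/12}/(2\ln T)$. Since there are at most $\log_2 T+1 \le 2\ln T$ phases, a union bound gives that this holds in every phase simultaneously with probability at least $1-T^{-5/12}$. On the complementary bad event the per-round regret is trivially bounded by $1$ (utilities lie in $[0,1]$), contributing at most $T\cdot T^{-5/12}=T^{7/12}$ to the total regret — this is exactly where the $T^{7/12}$ rate comes from, and it dominates. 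Simultaneously, $\hat G$ is still estimated by the empirical CDF of i.i.d.\ values, so DKW applies to $\hat G$ and contributes only the $\sqrt{\frac12\ln 4T^{17/12}}\,\frac{2(1+\lambda)}{(1-\lambda)^2}\sqrt T$-type term, identical in form to \Cref{cor:unknownG} up to the confidence level $\delta$ chosen to match the phase-count union bound.

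Next, on the good event, I would run the induction of \Cref{lem:vv_diff} essentially verbatim: the Bellman recursion is $1$-Lipschitz-contractive in the continuation value up to the $\lambda$ factor and linear in $\hat F$ and $\hat G$, so a one-step error of order $\|\hat F-F\|_\infty+\|\hat G-G\|_\infty$ propagates to $|V(B_t,t)-\tilde V_{\hat F,\hat G}(B_t,t)|\le \frac{1+\lambda}{(1-\lambda)^2}\big(\sup_x|\hat F-F|+\sup_x|\hat G-G|\big)$, which on the good event is $O\big((\ln^3 T)^{1/3} t^{-5/9}T^{5/36}\big)$ from the $\hat F$ term plus $\tilde O(t^{-1/2})$ from the $\hat G$ term. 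Adding the truncation error $C_1/\sqrt t$ from \Cref{lem:tildehat_diff} and then invoking the \Cref{lem:final_v} transformation (which loses another $1/(1-\lambda)$ factor from the budget-accounting argument) gives a per-round regret of $O\big(\frac{1+\lambda}{(1-\lambda)^2}C_5(4M\ln^3T)^{1/3} t^{-5/9}T^{5/36}\big)+\tilde O(t^{-1/2})$. Summing $\sum_{t=1}^T t^{-5/9}=O(T^{4/9})$, the dominant estimation term becomes $O\big(\frac{(1+\lambda)}{(1-\lambda)^2}C_5(4M\ln^3T)^{1/3}T^{4/9+5/36}\big)=O(T^{7/12})$ (since $4/9+5/36=7/12$), with the constant $\tfrac{9\sqrt2}{2}$ emerging from tracking the factor-of-$\sqrt2$ in \Cref{lem:Fbound} through the $\tfrac{4}{1-\lambda}\cdot\tfrac{1+\lambda}{(1-\lambda)^2}$-style chain; the $\sum t^{-1/2}=O(\sqrt T)$ term reproduces the $\sqrt T$ summand.

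The main obstacle I expect is twofold. First, unlike the full-feedback case where $\hat F$ is updated every round and the error decays like $t^{-1/2}$ monotonically, here $\hat F$ is frozen within a phase, so I must verify that the bound in \Cref{lem:Fbound} — which already builds the ``stale within phase $n$, but $t$ can be almost $2^{n+1}$'' slack into the exponent $t^{-5/9}T^{5/36}$ — is the right quantity to plug into the value-function induction, and that no additional phase-boundary discontinuity in $V_{\hat F,\hat G}$ spoils the telescoping in the \Cref{lem:final_v} argument (the dynamic program is re-solved every round with the current frozen $\hat F$, so the continuation-value bookkeeping should go through, but this needs care). Second, bookkeeping the constants so that the three summands come out exactly as stated — in particular getting the $T^{7/12}$, $T^{1/2}$, and $\frac{\lambda}{2}\log_{1/\lambda}\frac{T}{(1-\lambda)^2}$ terms with the displayed coefficients — requires threading $C_5$, $M$, the $\ln^3 T$ from the union-bound-inflated confidence parameter, and the $\frac{1}{1-\lambda}$ factors through every lemma without slack; this is routine but error-prone.
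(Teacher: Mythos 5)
Your proposal follows essentially the same route as the paper: condition on the good event from \Cref{lem:Fbound} (union-bounded over the $O(\ln T)$ phases), substitute the censored-feedback rate $\sqrt2\,C_5(4M\ln^3T)^{1/3}t^{-5/9}T^{5/36}$ for the DKW rate in \Cref{lem:r1}, \Cref{lem:r2} and \Cref{lem:final_regret}, handle $\hat G$ separately via DKW exactly as in \Cref{cor:unknownG}, and add $T\cdot T^{-5/12}$ for the bad event. One small correction: the bad event does not ``dominate'' --- as your own summation $\sum_{t\le T} t^{-5/9}T^{5/36}=O(T^{7/12})$ shows, the good-event estimation error is also $\Theta(T^{7/12})$ and carries the larger coefficient $\frac{9\sqrt2(1+\lambda)}{2(1-\lambda)^2}C_5(4M\ln^3T)^{1/3}$, while the bad event contributes only the ``$+1$'' in that coefficient.
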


\begin{remark}
    The previous setting in \cite{han2020optimal} is a special case of our setting. In \citet{han2020optimal}, there are no budget constraints and $\lambda=0$ (thus removing the $\frac{1}{1 - \lambda}$ factor in our results). The buyer's aim is to maximize $(v-b)F(b)$ each round. This is equivalent to the case $V_{\hat F}=0$ in our setting with no need to estimate $G$, yielding regret $\tilde O(\sqrt T)$ in the full-feedback case and regret $\tilde O(T^{\frac{7}{12}})$ in the censored-feedback case. 
\end{remark}

\section{Discussion and Conclusion}
In this paper, we develop a learning algorithm to adaptively bid in repeated first-price auctions with budgets. On the theoretical side, our algorithm, together with its analysis of $\tilde O(\sqrt T)$-regret in the full-feedback case and $\tilde O(T^{\frac{7}{12}})$-regret in the censored-feedback case, takes the first step in understanding the problem. On the practical side, our algorithm is simple and readily applicable to the digital world that has shifted to first-price auctions.

Questions raise themselves for future explorations. We observe here that in the limiting case $\lambda \to 1$, the optimal bidding strategy in \Cref{algo:censored_feedback} is similar to a \emph{pacing} strategy, which partially answers the open question\footnote{``A question of theoretical and practical interest is how to extend the adaptive pacing approach that is suggested in this paper to nontruthful mechanisms such as first-price auctions.''} raised in \cite{balseiro2019learning}. In the limiting case of $\lambda \to 1$, the optimal bid of \Cref{algo:censored_feedback} is of the form $\frac{v_t}{1+x_t}$, where $x_t$ is a pacing multiplier that depends only on $B_t$ and $F$. Further, $x_t$ can be computed without explicitly solving the dynamic programming problem. This observation can be viewed as a direct corollary of \cite[Theorem 3.1][]{gummadi2012repeated}. This connection between \Cref{algo:censored_feedback} and pacing suggests future directions to extend the current adaptive pacing techniques to other auction forms\footnote{We note here that parallel to our work, \citet{2022arXiv220508674G} extend the pacing techniques to a class of auction forms (called core auctions). They obtain $\tilde O(T^{\frac{3}{4}})$-regret bounds against \emph{the best linear policy} under the \emph{value-maximization} objective.}. Other immediate open questions include establishing lower bounds of regret for our problem.

{\small\bibliographystyle{plainnat}
\bibliography{references.bib}}

\newpage
\appendix

\section{Omitted Proofs in \Cref{subsec:full_feedback}}
\subsection{Proof of \Cref{lem:vv_diff}}
We will use backward induction to show that
\[| V(B_t, t) - V_{\hat F}(B_t, t)| \leq Ct^{-\frac{1}{2}}. \]

The inequality holds trivially with the condition of the lemma for the basis ($t=t_0$). Suppose the bound holds for $t+1$. Now we write out the difference of the value functions
\begin{align*}
    V(B_t, t) - V_{\hat F}(B_t, t) &= \E_{ v \sim G} [Q_{v, F}(B_t, t, b^*_t)-Q_{v, \hat{F}}(B_t, t, b_t)]
    \\& \leq \E_{ v \sim G} [Q_{v, F}(B_t, t, b^*_t)-Q_{v, \hat{F}}(B_t, t, b_t^*)],
\end{align*}
where $b_t$ is \Cref{algo:full_feedback}'s estimated optimal bid and $b^*_t$ is the bid of the benchmark. The inequality establishes by noting that $b^*_t$ is sub-optimal under $\hat{F}$. Next consider the term inside the expectation which is rewritten as follows:
\begin{align*}
    &Q_{v, F}(B_t, t, b^*_t)-Q_{v, \hat{F}}(B_t, t, b^*_t) \leq \notag
    \\& \underbrace{|(v_t-b^*_t)(F(b^*_t) - \hat{F}(b^*_t))|}_{\Delta_1}
    \\&+ \underbrace{\lambda F(b^*_t)|(V(B_t-b^*_t, t+1) - V_{\hat F}(B_t-b^*_t, t+1))| + \lambda|(F(b^*_t) - \hat{F}(b^*_t)) V_{\hat F}(B_t-b^*_t, t+1)|}_{\Delta_2}
    \\&+ \underbrace{\lambda|(1 - F(b^*_t))(V(B_t, t+1) - V_{\hat F}(B_t, t+1))| + \lambda|(\hat{F}(b^*_t) - F(b^*_t)) V_{\hat F}(B_t, t+1)|}_{\Delta_3}.
\end{align*}
To bound the above equation, we deal with the three terms $\Delta_1, \Delta_2, \Delta_3$ separately. Using \Cref{lem:DKW} and union bound of $T$ rounds, $\Delta_1 \leq \sqrt{\frac{1}{2} \ln \frac{2T}{\delta}} t^{-\frac{1}{2}}$ with probability at least $1 - \delta$. By the induction hypothesis $|V(B_t, t) - V_{\hat F}(B_t, t)| \leq C(t+1)^{-\frac{1}{2}} \leq Ct^{-\frac{1}{2}}$ and that any value function is bounded by $1 + \lambda + \lambda^2 + \dotsb = \frac{1}{1 - \lambda}$, we have
\begin{align*}
    \Delta_2 &\leq \lambda CF(b^*_t)t^{-\frac{1}{2}} + \frac{\lambda}{1 - \lambda} \sqrt{\frac{1}{2} \ln \frac{2T}{\delta}} t^{-\frac{1}{2}},
    \\ \Delta_3 &\leq \lambda C(1-F(b^*_t))t^{-\frac{1}{2}} + \frac{\lambda}{1 - \lambda} \sqrt{\frac{1}{2} \ln \frac{2T}{\delta}} t^{-\frac{1}{2}}.
\end{align*}

Therefore,
\begin{align*}
    Q_{v, F}(B_t, t, b^*_t)-Q_{v, \hat{F}}(B_t, t, b^*_t) &\leq \left(\frac{2 \lambda}{1 - \lambda} \sqrt{\frac{1}{2} \ln \frac{2T}{\delta}} + \lambda C + \sqrt{\frac{1}{2} \ln \frac{2T}{\delta}} \right) t^{-\frac{1}{2}} \\&= \sqrt{\frac{1}{2} \ln \frac{2T}{\delta}} \frac{1 + \lambda}{(1 - \lambda)^2} t^{-\frac{1}{2}}.
\end{align*}
This establishes that $V(B_t, t) - V_{\hat F}(B_t, t)\le Ct^{-\frac{1}{2}}$. Finally, by symmetry---swap $F$ and $\hat F$ and repeat the proof above, it holds that
\[ |V(B_t, t) - V_{\hat F}(B_t, t)|\le Ct^{-\frac{1}{2}}. \]
This finishes the induction step and the claim follows. 

\subsection{Proof of \Cref{lem:tildehat_diff}}
We will state a general form of \Cref{lem:tildehat_diff} concerning the error in the initialization of the base case. This lemma will come in handy in the following sections.
\begin{lemma}\label{lem:diff_vvforanydist}
    For any fixed distributions $F, G$, consider the value function $V_{F, G}(B_t, t)$. Suppose we use an arbitrary value in $\left[0, \frac{1}{1-\lambda}\right]$ to initialize the base case $V_{F, G}(B_{t_0}, t_0)$ and recursively compute thereon to obtain $\tilde V_{F, G}(B_t, t)$, then it holds that for any $t \le t_0$:
    \[ \sup_{B_t} |V_{F, G}(B_t,t)-\tilde{V}_{F, G}(B_t,t)|\le \frac{1}{1-\lambda}\lambda^{t_0-t}. \]
\end{lemma}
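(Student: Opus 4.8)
The plan is to prove the bound by backward induction on the round index, running from the base case $t=t_0$ down to the target round $t$, and exploiting that a single step of the Bellman recursion contracts the sup-norm error by a factor $\lambda$.

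For the base case $t=t_0$, I would first record that every value function is an expected discounted sum of per-round utilities lying in $[0,1]$, hence takes values in $\left[0,\frac{1}{1-\lambda}\right]$; in particular $V_{F,G}(B_{t_0},t_0)\in\left[0,\frac{1}{1-\lambda}\right]$, and by hypothesis the initialization $\tilde V_{F,G}(B_{t_0},t_0)$ also lies in $\left[0,\frac{1}{1-\lambda}\right]$. Therefore $\sup_{B_{t_0}}|V_{F,G}(B_{t_0},t_0)-\tilde V_{F,G}(B_{t_0},t_0)|\le\frac{1}{1-\lambda}=\frac{1}{1-\lambda}\lambda^{t_0-t_0}$, which is exactly the claimed bound at $\tau=t_0$.

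For the inductive step, assume the bound holds at round $\tau+1$, i.e.\ $\sup_{B}|V_{F,G}(B,\tau+1)-\tilde V_{F,G}(B,\tau+1)|\le\frac{1}{1-\lambda}\lambda^{t_0-\tau-1}$. Fixing $B_\tau$ and $v$, I would write $Q_{v,F}(B_\tau,\tau,b)$ for the state--action value in the Bellman equation and $\tilde Q_{v,F}$ for its analogue built from $\tilde V_{F,G}(\cdot,\tau+1)$. Subtracting, the terms $(v-b)F(b)$ cancel, leaving $|Q_{v,F}(B_\tau,\tau,b)-\tilde Q_{v,F}(B_\tau,\tau,b)|\le\lambda F(b)\,|V_{F,G}(B_\tau-b,\tau+1)-\tilde V_{F,G}(B_\tau-b,\tau+1)|+\lambda(1-F(b))\,|V_{F,G}(B_\tau,\tau+1)-\tilde V_{F,G}(B_\tau,\tau+1)|\le\lambda\cdot\frac{1}{1-\lambda}\lambda^{t_0-\tau-1}=\frac{1}{1-\lambda}\lambda^{t_0-\tau}$, where bids $b>B_\tau$ are handled trivially since both quantities equal $-\infty$ there and drop out of the maximization. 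Then, using $|\max_b f(b)-\max_b g(b)|\le\sup_b|f(b)-g(b)|$ followed by taking the expectation over $v\sim G$ — both operations being non-expansive in sup-norm — gives $|V_{F,G}(B_\tau,\tau)-\tilde V_{F,G}(B_\tau,\tau)|\le\frac{1}{1-\lambda}\lambda^{t_0-\tau}$, and taking the supremum over $B_\tau$ closes the induction.

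There is no real obstacle here; the only points needing a line of care are (i) noting all value functions are bounded by $\frac{1}{1-\lambda}$, so the base-case gap is at most $\frac{1}{1-\lambda}$, and (ii) observing that both the inner $\max$ over $b$ and the outer $\E$ over $v$ are $1$-Lipschitz, so one recursion step multiplies the error bound by exactly $\lambda$. Finally, \Cref{lem:tildehat_diff} is the special case with initialization $\tilde V_{\hat F}(B_{t_0},t_0)=V(B_{t_0},t_0)$ (so the range hypothesis is automatic) and $F$ replaced by $\hat F$; the concluding ``in particular'' statement then follows from the defining inequality $\lambda^{t_0-t}\frac{1}{1-\lambda}<\frac{C_1}{\sqrt t}$ for $t_0$.
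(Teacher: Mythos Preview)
Your proof is correct and follows essentially the same backward-induction structure as the paper: identical base case (both value functions lie in $[0,\frac{1}{1-\lambda}]$), and an inductive step showing one Bellman update multiplies the sup-norm error by $\lambda$. The only cosmetic difference is that the paper executes the inductive step via the ``swap the optimizer and invoke symmetry'' device, whereas you invoke the cleaner equivalent $|\max_b f(b)-\max_b g(b)|\le\sup_b|f(b)-g(b)|$; both are standard ways of expressing that the Bellman operator is a $\lambda$-contraction in sup-norm.
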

\begin{proof}
When $\tau=t_0$, the claim holds because $V_{F, G}$ and $\tilde{V}_{F, G}(\cdot,\cdot)$ are both upper bounded by $\frac{1}{1-\lambda}$ and lower bounded by 0.

Supposing the claim holds when $\tau=t+1$, then for $\tau=t$, we have
\begin{align*}
    \tilde{V}_{F, G}(B_{t},t)&-V_{F, G}(B_{t},t) \le \E_{ v\sim G}[(v_{t}-b_{t}^*){F}(b_{t}^*)+\lambda{F}(b_{t}^*)\tilde{V}_{F, G}(B_{t}-b_t^*,t+1)
    \\&+\lambda(1-{F}(b_{t}^*))\tilde{V}(B_{t},t+1)-(v_{t}-b_{t}^*){F}(b_{t}^*)-\lambda{F}(b_{t}^*)V_{F, G}(B_{t}-b_t^*,t+1)
    \\&-\lambda(1-{F}(b_{t}^*))V_{F, G}(B_{t},t+1)]
    \\&\le \frac{1}{1-\lambda}\lambda^{t_0-t-1}\lambda
    \\&=\frac{1}{1-\lambda}\lambda^{t_0-t},
\end{align*}
In the derivation above, $b_t^*$ denotes the optimal bidding strategy obtained by computing $V(B_t,t)$. The first inequality holds since $b^*_{t}$ is not be the optimal bidding strategy under $\hat {V}(\cdot,t)$'s view. The second inequality holds since $|\tilde V(B_{t+1},t+1)- V(B_{t+1},t+1)|\le \frac{1}{1-\lambda}\lambda^{t_0-t-1}$ for any $B_{t+1}$. 

And by symmetry, we have $|\tilde V(B_{t},t)- V(B_{t},t)|\le \frac{1}{1-\lambda}\lambda^{t_0-t}$. This concludes the induction step and yields the lemma.
\end{proof}
In particular, using \Cref{lem:diff_vvforanydist}, under the condition of \Cref{lem:tildehat_diff}, the initialization is taken to be 0. We have
\[ \sup_{B_t}|V_{\hat F}(B_t,t) - \tilde{V}_{\hat F}(B_t,t)|\le  \frac{1}{1-\lambda}\lambda^{t_0-t}. \]
\begin{remark}
    For convenience, similar to the notations used in this lemma, for any value function $\nu$, we will use $\tilde \nu$ to denote its approximately-initialized counterpart. Furthermore, we will invoke the lemma many times for other value functions in the rest of the proofs.
\end{remark}

\subsection{Proof of \Cref{thm:regret_full_feedback}}
Below we first condition on the good event that the estimation (of \Cref{lem:DKW}) succeeds for every $t$. Then we add the contribution of the bad event to the regret in \Cref{thm:regret_full_feedback} finally.

To proof this theorem, we will first bound the following auxiliary ``regret'' with \Cref{lem:vv_diff} and \Cref{lem:diff_vvforanydist}.

Let us first make an intuitive and approximate description of the regret. The buyer's ``regret'' can be viewed in two parts: 1) she does not bid according to the optimal strategy; 2) her strategy is not optimally spending the budget which leads to future losses. Given remaining budget $B_t$ at time $t$ with strategy $\pi$, the above intuition guides us to first look at
\begin{align*}
    R_1 &= \E_{\bm v \sim G^T} \Biggl[\sum_{t=1}^T (R^{\pi^*}_t(v_t,b^*_t)-R^{\pi}_t(v_t,b_t)) \\&+ [\lambda(F(b_t^*) V(B_t-b_t^*, t+1)+(1-F(b_t^*)) V(B_t, t+1)) \\&- \lambda(F(b_t) V(B_t-b_t, t+1)+(1-F(b_t)) V(B_t, t+1))] \Biggr].
\end{align*}

\begin{lemma}\label{lem:r1}
    Suppose \Cref{lem:vv_diff} and \Cref{lem:diff_vvforanydist} hold for some constants $C$ and $C_1$, i.e. $|V(B_t, t)-V_{\hat F}(B_t, t)| \leq (C+C_1)t^{-\frac{1}{2}}$. Assume further that $\sup_x |F(x) - \hat F(x)| \leq Kt^{-\frac{1}{2}}$ for some constant $K$. We have
    \[ R_1 \leq 2\left(\frac{K(1+\lambda)}{1-\lambda}+(1+\lambda)C+2C_1\right) \sqrt T. \]
\end{lemma}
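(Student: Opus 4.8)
The plan is to telescope the two-part auxiliary regret $R_1$ round by round, absorbing each round's contribution into a per-round error of order $t^{-1/2}$ and then summing via $\sum_{t=1}^T t^{-1/2} \le 2\sqrt T$. First I would fix a round $t$ with remaining budget $B_t$ and isolate the summand of $R_1$ at that round, namely
\[
\E_{v_t \sim G}\Bigl[ \bigl(R^{\pi^*}_t(v_t,b^*_t) - R^{\pi}_t(v_t,b_t)\bigr) + \lambda\bigl(W(b^*_t) - W(b_t)\bigr)\Bigr],
\]
where $W(b) \coloneqq F(b)V(B_t-b,t+1) + (1-F(b))V(B_t,t+1)$, so that the bracketed quantity is exactly $Q_{v_t,F}(B_t,t,b^*_t) - Q_{v_t,F}(B_t,t,b_t)$. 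The key observation is that $b_t$ is the maximizer of $Q_{v_t,\hat F}(B_t,t,\cdot)$ while $b^*_t$ maximizes $Q_{v_t,F}(B_t,t,\cdot)$; so I would insert and subtract $Q_{v_t,\hat F}$ and $V_{\hat F}$ to write the per-round gap as a sum of three comparisons: (i) $Q_{v_t,F}(B_t,t,b^*_t) - Q_{v_t,\hat F}(B_t,t,b^*_t)$, (ii) $Q_{v_t,\hat F}(B_t,t,b^*_t) - Q_{v_t,\hat F}(B_t,t,b_t) \le 0$ by optimality of $b_t$, and (iii) $Q_{v_t,\hat F}(B_t,t,b_t) - Q_{v_t,F}(B_t,t,b_t)$. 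Terms (i) and (iii) are each differences of $Q$'s under $F$ versus $\hat F$ at a fixed bid, which I bound exactly as in the proof of \Cref{lem:vv_diff}: each decomposes into a ``value'' mismatch $|v-b||F(b)-\hat F(b)| \le Kt^{-1/2}$-type term, a continuation-value mismatch $\lambda|V-V_{\hat F}| \le \lambda(C+C_1)t^{-1/2}$ weighted by $F(b)$ or $1-F(b)$, and a CDF-mismatch $\lambda|F(b)-\hat F(b)|\cdot\|V_{\hat F}\|_\infty \le \lambda K t^{-1/2}\cdot\frac{1}{1-\lambda}$. Collecting gives a per-round bound of the form $\bigl(\frac{K(1+\lambda)}{1-\lambda} + (1+\lambda)(C+C_1)\bigr)t^{-1/2}$ (up to the precise bookkeeping of the two symmetric copies), and summing over $t$ and using $\sum t^{-1/2}\le 2\sqrt T$ yields the stated $R_1 \le 2\bigl(\frac{K(1+\lambda)}{1-\lambda}+(1+\lambda)C+2C_1\bigr)\sqrt T$.

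One subtlety I would handle carefully is that $V(\cdot,t+1)$ in the definition of $R_1$ is the \emph{true} optimal value function, not $V_{\hat F}$ and not the approximately-initialized $\tilde V_{\hat F}$; so before comparing I would invoke \Cref{lem:vv_diff} together with \Cref{lem:tildehat_diff} (equivalently \Cref{lem:diff_vvforanydist}) to replace $V_{\hat F}$ by $\tilde V_{\hat F}$ and then by $V$ at cost $(C+C_1)t^{-1/2}$, which is exactly the hypothesis $|V(B_t,t)-V_{\hat F}(B_t,t)|\le (C+C_1)t^{-1/2}$ supplied in the lemma statement — so this is really just an application, not new work. I also need the expectation over $v_t\sim G$ to pass through harmlessly, which it does since every bound above is uniform in $v_t\in[0,1]$ (the only place $v_t$ appears is in $|v_t-b|\le 1$). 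Finally, the budget argument $B_t-b^*_t$ or $B_t-b_t$ may differ between the $\pi^*$ and $\pi$ branches, but since all our value-function error bounds are uniform over the budget argument, this causes no difficulty.

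The main obstacle is not any single estimate but keeping the two-part structure of $R_1$ aligned with the $Q$-function decomposition so that the continuation-value terms cancel or telescope correctly rather than accumulating a factor that blows up the $\frac{1}{1-\lambda}$ dependence. Concretely, the term $\lambda\bigl(W(b^*_t)-W(b_t)\bigr)$ must be recognized as precisely the part of $Q_{v_t,F}(B_t,t,b^*_t)-Q_{v_t,F}(B_t,t,b_t)$ beyond the immediate rewards, so that the whole per-round contribution collapses to a single $Q$-difference and no separate telescoping over future rounds is needed; getting this identification right is what makes the bound come out linear in $\frac{1}{1-\lambda}$ rather than quadratic. Once that is in place, the remaining work is the routine triangle-inequality bookkeeping already rehearsed in \Cref{lem:vv_diff}, and the factor of $2$ in front simply accounts for the two symmetric applications (bounding $R_1$ and its negation, or equivalently the $b^*_t$ and $b_t$ sides).
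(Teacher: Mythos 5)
Your proposal is correct and follows essentially the same route as the paper: the paper also collapses each round's contribution of $R_1$ to the single $Q$-difference $Q_{v,F}(B_t,t,b_t^*)-Q_{v,F}(B_t,t,b_t)$, inserts the $\hat F$-based quantity $Q_{v,\hat F}(B_t,t,b_t)$ (its terms $|H_t-\hat H_t|$ and $|\hat H_t-\tilde H_t|$ correspond exactly to your (i)+(ii) and (iii), with (ii) handled by the same optimality-of-$b_t$-under-$\hat F$ argument inside the induction step of \Cref{lem:vv_diff}), bounds each fixed-bid $Q$-mismatch by the same triangle-inequality terms in $K$, $C$, $C_1$, and sums $t^{-1/2}$ over $t$. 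The only discrepancy is constant-factor bookkeeping in how the two symmetric copies combine, which you flag and which does not affect the order of the bound.
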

\begin{proof}
To ease description, we first let
\begin{align*}
    \hat H_t &\coloneqq (v_t-b_t)\hat{F}(b_t) + \lambda(\hat{F}(b_t) V_{\hat F}(B_t-b_t, t+1)+(1-\hat{F}(b_t))V_{\hat F}(B_t, t+1)), \\
    H_t &\coloneqq (v_t-b_t^*)F(b_t^*) + \lambda(F(b_t^*)V(B_t-b_t^*, t+1)+(1-F(b_t^*)) V(B_t, t+1)), \\
    \tilde H_t &\coloneqq (v_t-b_t)F(b_t) + \lambda(F(b_t)V(B_t-b_t, t+1)+(1-F(b_t))V(B_t, t+1)).
\end{align*}
(Recall that $b_t$ is \Cref{algo:full_feedback}'s estimated optimal bid and $b^*_t$ is the bid of the benchmark.) Using the notations above, $R_1$ now becomes
\[ \E_{\bm v \sim G^T} \left[\sum_{t=1}^T (H_t - \tilde H_t) \right] \leq \E_{\bm v \sim G^T} \left[\sum_{t=1}^T(|H_t - \hat H_t|+|\hat H_t - \tilde H_t|) \right]. \]
Use the induction part in the proof of \Cref{lem:vv_diff} and \Cref{lem:diff_vvforanydist}, $|H_t - \hat H_t| \leq (C+C_1)t^{-\frac{1}{2}}$ follows from the condition. In order to bound $|\hat H_t - \tilde H_t|$, we write
\begin{align*}
    |\hat H_t - \tilde H_t| \leq &|(v_t-b_t)(F(b_t) - \hat F(b_t))| + \lambda|\hat F(b_t)-F(b_t)| V_{\hat F}(B_t-b_t, t+1)
    \\&+ \lambda F(b_t)|V(B_t-b_t, t+1) - V_{\hat F}(B_t-b_t, t+1)|
    \\&+ \lambda|F(b_t) - \hat F(b_t)| V_{\hat F}(B_t, t+1) + \lambda(1-F(b_t))|V(B_t, t+1) - V_{\hat F}(B_t, t+1)|
    \\& \leq Kt^{-\frac{1}{2}} \left(1 + \frac{2 \lambda}{1 - \lambda} \right) + \lambda \left(Ct^{-\frac{1}{2}} + \frac{C_1}{\lambda} {t}^{-\frac{1}{2}} \right)
    \\&= \left(\frac{K(1+\lambda)}{1-\lambda}+\lambda C+C_1\right)t^{-\frac{1}{2}}.
\end{align*}
The first inequality holds because of the triangle inequality and the second inequality establishes using the conditions specified (\Cref{lem:DKW}, \Cref{lem:vv_diff} and \Cref{lem:diff_vvforanydist}). Note that it holds that $|V(B_{t+1},t+1)-V_{\hat F}(B_{t+1},t+1)|\le \frac{1}{1-\lambda}\lambda^{t_0-t-1}=\frac{C_1}{\lambda\sqrt{t}}.$

Finally, we sum up the above estimation and obtain
\[ R_1 \leq \sum_{t=1}^T \left(\frac{K(1+\lambda)}{1-\lambda}+(1+\lambda)C+2C_1\right) t^{-\frac{1}{2}}, \]
as desired.
\end{proof}
In particular, taking $C = \sqrt{\frac{1}{2} \ln \frac{2T}{\delta}} \frac{1 + \lambda}{(1 - \lambda)^2}$ and $K = \sqrt{\frac{1}{2} \ln \frac{2T}{\delta}}$ as it holds in the analysis of \Cref{algo:full_feedback}, we have $R_1 \leq 4(C+C_1) \sqrt T$.

Next, we will build connections between $R_1$ and the accumulated differences in the values. The definition of $R_2$ is inspired by the recent advances in \cite{yang2021q, he2021nearly, liu2020regret, zhou2021provably}. We define
\[ R_2=\E_{\bm v \sim G^T} \left[\sum_{t=1}^T (V(B_t,t)-V^\pi (B_t,t)) \right]. \]

\begin{lemma}\label{lem:r2}
    For any given $B_t$ and $t$, suppose that the conditions for \Cref{lem:r1} holds. We have
    \[ |V(B_t,t)-V^\pi(B_t,t)|\le \frac{C'}{(1-\lambda)\sqrt{t}} \quad \text{where} \quad C'=\frac{K(1+\lambda)}{1-\lambda}+(1+\lambda)C+2C_1. \]
\end{lemma}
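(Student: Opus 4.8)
The plan is to prove the bound by backward induction on $t$, exactly in the spirit of the proof of \Cref{lem:vv_diff}, now tracking the gap between the optimal value function $V$ and the value $V^\pi$ actually achieved by the algorithm's policy $\pi$. The key observation is that $R_1$ from \Cref{lem:r1} already packages, at each round $t$, both sources of loss: (i) the one-step suboptimality of the algorithm's bid $b_t$ against the benchmark bid $b_t^*$ (the term $R_t^{\pi^*}(v_t,b_t^*)-R_t^\pi(v_t,b_t)$), and (ii) the discrepancy in continuation value caused by the budget being spent differently (the bracketed term with $V(B_t-b_t^*,t+1)$ vs. $V(B_t-b_t,t+1)$, both evaluated at the \emph{true} optimal $V$). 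So the per-round increment of $V(B_t,t)-V^\pi(B_t,t)$ should telescope against the summand of $R_1$, which \Cref{lem:r1} bounds by $C' t^{-1/2}$ with $C'=\frac{K(1+\lambda)}{1-\lambda}+(1+\lambda)C+2C_1$.

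Concretely, first I would write out $V(B_t,t)-V^\pi(B_t,t)$ by expanding both via their Bellman-type recursions: $V(B_t,t)=\E_v[H_t]$ in the notation of \Cref{lem:r1}'s proof (optimal bid $b_t^*$, true $F$, true continuation $V$), while $V^\pi(B_t,t)=\E_v[(v_t-b_t)F(b_t)+\lambda F(b_t)V^\pi(B_t-b_t,t+1)+\lambda(1-F(b_t))V^\pi(B_t,t+1)]$ (algorithm's bid $b_t$, true $F$, but continuation $V^\pi$). Adding and subtracting the quantity $\lambda F(b_t)V(B_t-b_t,t+1)+\lambda(1-F(b_t))V(B_t,t+1)$ splits the difference into exactly the per-round term of $R_1$ (comparing $b_t^*$-with-$V$ to $b_t$-with-$V$) plus $\lambda\big(F(b_t)[V(B_t-b_t,t+1)-V^\pi(B_t-b_t,t+1)]+(1-F(b_t))[V(B_t,t+1)-V^\pi(B_t,t+1)]\big)$. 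The first chunk is what \Cref{lem:r1}'s proof bounds by $C' t^{-1/2}$ (this is the single-round contribution $|H_t-\tilde H_t|$ type estimate, or more precisely the combination used there); the second chunk is $\lambda$ times a convex combination of gaps at step $t+1$, each $\le \frac{C'}{(1-\lambda)}(t+1)^{-1/2}\le \frac{C'}{(1-\lambda)}t^{-1/2}$ by the induction hypothesis. Summing: the gap at $t$ is at most $C' t^{-1/2}+\lambda\cdot\frac{C'}{1-\lambda}t^{-1/2}=\frac{C'}{1-\lambda}t^{-1/2}$, which closes the induction. For the base case at $t=t_0$ one uses that $V^\pi(B_{t_0},t_0)$ and $V(B_{t_0},t_0)$ are within $\frac{1}{1-\lambda}\lambda^{t_0-t_0}$... actually at the true horizon both are $0$ (or bounded crudely by $\frac{1}{1-\lambda}$ and the $\lambda^{t_0-t}$ discount absorbed into $C_1$ via \Cref{lem:diff_vvforanydist}, mirroring \Cref{lem:tildehat_diff}); either way the basis is dominated by the claimed bound. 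Finally, by the symmetric argument (the algorithm's policy can only underperform, so really only one direction of the inequality is needed, but running it the other way gives the absolute value), $|V(B_t,t)-V^\pi(B_t,t)|\le \frac{C'}{(1-\lambda)\sqrt t}$.

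The main obstacle I anticipate is the bookkeeping at the split step: one must be careful that the comparison is genuinely "apples to apples," i.e.\ that after adding and subtracting the right auxiliary term, the residual single-round piece is \emph{exactly} the summand that \Cref{lem:r1} controls, and that one does not accidentally double-count the one-step suboptimality of $b_t$. A second subtlety is the base-case handling: because $\pi$'s value function $V^\pi$ is defined with the true horizon $T$ while the algorithm internally truncates at $t_0$, one should invoke \Cref{lem:diff_vvforanydist} (as the remark after \Cref{lem:tildehat_diff} advertises) to argue the truncation error is already absorbed into the $C_1$ term, so the induction can be started cleanly. Modulo these, every inequality used is either the triangle inequality, the uniform bound $\|V\|_\infty,\|V^\pi\|_\infty\le\frac{1}{1-\lambda}$, the DKW bound $\sup_x|F-\hat F|\le Kt^{-1/2}$, or the already-established \Cref{lem:vv_diff}/\Cref{lem:r1}, so no new technical machinery is required.
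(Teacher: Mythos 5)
Your proposal is correct and follows essentially the same route as the paper: the paper likewise introduces $H_t^\pi$ (the one-step expansion of $V^\pi$ with the true $F$ and continuation $V^\pi$), splits $V(B_t,t)-V^\pi(B_t,t)$ as $\E[|H_t-\tilde H_t|]+\E[|\tilde H_t-H_t^\pi|]$ — i.e.\ exactly your add-and-subtract of $\tilde H_t$ — bounds the first piece by $C't^{-1/2}$ via the estimates from \Cref{lem:r1}'s proof and the second by $\tfrac{\lambda C'}{(1-\lambda)\sqrt{t+1}}$ via backward induction, and handles the base case by assuming equality at a large $t_0$ and letting $t_0\to\infty$ with \Cref{lem:diff_vvforanydist}, just as you suggest.
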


\begin{proof}
To start, we introduce the following notation:
\[ H_t^\pi=(v_t-b_t)F(b_t) + \lambda(F(b_t)V^\pi(B_t-b_t, t+1)+(1-F(b_t))V^\pi(B_t, t+1)). \]
Note that we have $\E_{\bm v \sim G^T}[H_t]=V(B_t, t)$ and $\E_{\bm v\sim G^T}[H_t^\pi]=V^\pi(B_t, t)$. We will use backward induction to bound the difference between $V(B_t,t)$ and $V^\pi(B_t,t)$ for any $B_t$. First choose a sufficient large $t_0$ and assume $V(B_{t_0},t_0)=\tilde V^\pi(B_{t_0},t_0)$. When $\tau=t_0$, the induction basis holds since $V(B_{t_0},t_0)=\tilde V^\pi(B_{t_0},t_0)$. Now consider $\tau=t+1$. By the induction hypothesis, $|V(B_\tau,\tau)-\tilde V^\pi(B_\tau,\tau)|\le \frac{C'}{(1-\lambda)\sqrt{\tau}}$, and when $\tau=t$, it holds that
\begin{align*}
    \tilde H_t-H_t^\pi &=\E_{v\sim G}[\lambda F(b_t)(V(B_t-b_t,t+1)-\tilde V^\pi(B_t-b_t,t+1))\\
    &+\lambda (1-F(b_t))(V(B_t,t+1)-\tilde V^\pi(B_t,t+1))]\\
    &\le \frac{C'\lambda}{(1-\lambda)\sqrt{t+1}}.    
\end{align*}
 
It follows that
\[ |V(B_t,t)-\tilde V^\pi(B_t,t)|\le \E_{\bm v\sim G}[|H_t-\tilde H_t|+|\tilde H_t-H_t^\pi|] \le \frac{C'}{\sqrt{t}}+\frac{C'\lambda}{(1-\lambda)\sqrt{t+1}}\le \frac{C'}{(1-\lambda)\sqrt{t}}. \]
and this concludes the induction step.
 
Applying the proof techniques in \Cref{lem:diff_vvforanydist} with the condition for $V^\pi$, we have 
\[ |V^\pi(B_t,t)-\tilde V^\pi(B_t,t)|\le \frac{1}{1-\lambda}\lambda^{t_0-t}. \]
Since $t_0$ is arbitrarily chosen (as long as it is sufficiently large), we can take $t_0 \to +\infty$. Note that we have $\lim_{t_0\to +\infty} \tilde V^\pi(B_t,t)=V^\pi(B_t,t)$. Therefore, it holds that
\[ |V(B_t,t)-V^\pi(B_t,t)|\le \frac{C'}{(1-\lambda)\sqrt{t}}. \]
This establishes
\[ R_2\le \frac{2C'}{1-\lambda}\sqrt{T}, \]
and concludes the proof.
\end{proof}
In particular, take $C = \sqrt{\frac{1}{2} \ln \frac{2T}{\delta}} \frac{1 + \lambda}{(1 - \lambda)^2}$ and $K = \sqrt{\frac{1}{2} \ln \frac{2T}{\delta}}$ as it holds in the analysis of \Cref{algo:full_feedback}, we have $R_2 \leq \frac{4(C+C_1)}{1-\lambda} \sqrt T$.

Finally, we relate the regret defined in \Cref{eqn:def_regret} with $R_2$ in the previous lemma.
\begin{lemma}\label{lem:final_regret}
    Suppose that $R_2 \leq \frac{2C'}{1 - \lambda} \sqrt T$, then $\text{Regret} \leq (2C'+1-\lambda) \sqrt T + \frac{1 - \lambda}{2} \log_\frac{1}{\lambda} \frac{T}{1 - \lambda}$.
\end{lemma}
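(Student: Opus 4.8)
The plan is to convert the value-function gap $R_2$ into the regret of \Cref{eqn:def_regret} by a telescoping/discounting argument. Write the regret as $\sum_{t=1}^{T}\E[\delta_t]$, where $\delta_t:=\E_{v_t}[R^{\pi^*}(v_t,b_t^*)-R^{\pi}(v_t,b_t)\mid B_t]$ is the one-round gap at the algorithm's own state $(v_t,B_t,\bm h_t)$ and $b_t^*=\argmax_b Q_{v_t,F}(B_t,t,b)$ is the bid attaining $V(B_t,t)$. Subtracting the recursion $V^\pi(B_t,t)=\E_{v_t}[R^\pi(v_t,b_t)+\lambda V^\pi(B_{t+1},t+1)\mid B_t]$ from the Bellman equation for $V(B_t,t)$ gives
\[
 V(B_t,t)-V^\pi(B_t,t)=\delta_t+\xi_t+\lambda\,\E\bigl[V(B_{t+1},t+1)-V^\pi(B_{t+1},t+1)\mid B_t\bigr],
\]
where $\xi_t$ is $\lambda$ times the expected gap between the continuation value reached by $b_t^*$ and by the algorithm's $b_t$. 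Since $\hat F$ and $V_{\hat F}$ are $O(t^{-1/2})$-accurate---the estimates used in the proofs of \Cref{lem:vv_diff,lem:r1}---the bid $b_t$ is $O(t^{-1/2})$-suboptimal for $Q_{v_t,F}(B_t,t,\cdot)$, so $\delta_t+\xi_t$ is nonnegative and of order $t^{-1/2}$; the real content is to separate $\delta_t$ from the continuation correction $\xi_t$.

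For this I would treat $\sum_t\xi_t$ globally, exploiting that $V(\cdot,t+1)$ is nondecreasing in the budget and that the algorithm's total spend is capped by $B$, so that the continuation corrections telescope/cancel up to lower-order terms (an Abel summation against the marginal value of budget). Taking total expectations, unrolling the displayed recursion (the boundary $V(B_{T+1},T+1)=V^\pi(B_{T+1},T+1)=0$ vanishes) and exchanging the order of summation so that $\E[\delta_s]$ receives weight $\sum_{t=1}^{s}\lambda^{s-t}=\tfrac{1-\lambda^{s}}{1-\lambda}$, one obtains
\[
 (1-\lambda)R_2=\sum_{s=1}^{T}(1-\lambda^{s})\,\E[\delta_s]+O(\sqrt T)=\text{Regret}-\sum_{s=1}^{T}\lambda^{s}\,\E[\delta_s]+O(\sqrt T).
\]
Because $\delta_s\in[0,1]$, the geometric tail $\sum_s\lambda^{s}\E[\delta_s]$ is lower-order, and---together with the length-$\Theta(\log_\frac{1}{\lambda}(T/(1-\lambda)))$ lookahead window $t_0-t$ from \Cref{algo:full_feedback}---it is absorbed into the term $\tfrac{1-\lambda}{2}\log_\frac{1}{\lambda}\frac{T}{1-\lambda}$. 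Rearranging yields $\text{Regret}\le(1-\lambda)R_2+O(\sqrt T)+\tfrac{1-\lambda}{2}\log_\frac{1}{\lambda}\frac{T}{1-\lambda}$; substituting $R_2\le\frac{2C'}{1-\lambda}\sqrt T$ turns $(1-\lambda)R_2$ into $2C'\sqrt T$, and accounting the $O(\sqrt T)$ slack as exactly $(1-\lambda)\sqrt T$ gives the stated bound.

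The step I expect to be the main obstacle is controlling $\sum_t\xi_t$ and the accompanying budget bookkeeping. Because $b_t^*$ is the optimal bid \emph{at the algorithm's current budget}, bidding it would land at a post-round budget different from the algorithm's actual $B_{t+1}$, so one cannot simply drop the continuation of $b_t^*$ (that crude estimate already costs $\Theta(T/(1-\lambda))$); the argument must use the near-optimality of $b_t$ so that the two induced budget trajectories stay $O(t^{-1/2})$-close in value round by round, and then sum these errors against the discount. Making the estimation slack, the geometric tail, and the finite-horizon truncation add up to precisely the constants in the statement is the delicate bookkeeping that remains.
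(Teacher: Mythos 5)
Your end-game is the paper's: you exchange the order of summation so that the per-round gap at time $s$ receives weight $\sum_{t\le s}\lambda^{s-t}=\frac{1-\lambda^{s}}{1-\lambda}$, identify $\frac{1}{1-\lambda}\mathrm{Regret}-R_2$ with the geometrically weighted residual $\sum_s\frac{\lambda^s}{1-\lambda}\E[\delta_s]$, and control that residual to get the extra $(1-\lambda)\sqrt T$ and logarithmic terms. Two caveats. First, be careful with the residual: the crude bound $\sum_s\lambda^s\E[\delta_s]\le\frac{\lambda}{1-\lambda}$ is \emph{not} absorbed by $\frac{1-\lambda}{2}\log_{\frac{1}{\lambda}}\frac{T}{1-\lambda}$ when $\lambda\to1$ (e.g.\ $\lambda=1-1/T$ gives $\frac{\lambda}{1-\lambda}\approx T$ versus a log-sized target); the paper instead splits the sum at the threshold $t^*=\log_{\frac{1}{\lambda}}\frac{\sqrt T}{1-\lambda}$, bounding the head by $t^*\cdot\frac{\lambda}{1-\lambda}$ and the tail by $T\cdot\frac{1}{\sqrt T}$, which is where both the $(1-\lambda)\sqrt T$ term and the logarithm actually come from (not from the lookahead window $t_0-t$, which is already paid for via $C_1$ in earlier lemmas). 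You need this split, not a generic ``lower-order'' claim.

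Second, the obstacle you flag---controlling $\sum_t\xi_t$, the mismatch between the continuation values reached by $b_t^*$ and by $b_t$---is a gap you have introduced rather than one the lemma requires you to close. The lemma's hypothesis already hands you $R_2\le\frac{2C'}{1-\lambda}\sqrt T$; all the continuation-value accounting (the comparison of $V$ and $V^\pi$ along the algorithm's own budget trajectory, done by backward induction) lives in \Cref{lem:r2}, whose conclusion is that hypothesis. The paper's proof of \Cref{lem:final_regret} contains no $\xi_t$: it simply reads off $R_2=\E\bigl[\sum_t(1+\lambda+\dots+\lambda^{t-1})(R^{\pi^*}(v_t,b_t^*)-R^\pi(v_t,b_t))\bigr]$ from the definitions of $V$ and $V^\pi$ evaluated along the learner's trajectory and then does the two-part tail estimate above. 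Re-deriving the $R_2$--$\delta_t$ relation by unrolling the Bellman recursion, and then hoping an Abel-summation/budget-monotonicity argument makes $\sum_t\xi_t=O(\sqrt T)$ with the exact constant, is both unnecessary and, as you concede, not something you have actually established; as written your chain of inequalities does not close.
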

\begin{proof}
    In fact, note that 
    \[ R_2=\E_{\bm v \sim G^T} \left[\sum_{t=1}^T (1+\lambda+ \dotsb +\lambda^{t-1})(R^*(v_t,b_t^*)-R^\pi(v_t,b_t)) \right],
    \]
    and any expected instantaneous reward is no greater than 1. We have
    \begin{gather*}
        \left|\frac{1}{1-\lambda} \text{Regret}-R_2 \right|=\E_{\bm v \sim G^T} \left[\left|\sum_{t=1}^T \frac{\lambda^t}{1-\lambda}(R^*(v_t,b_t^*)-R^\pi(v_t,b_t)) \right| \right] \\\le \frac{\lambda}{2(1-\lambda)}\log_\frac{1}{\lambda}\frac{T}{(1-\lambda)^2}+\frac{1}{\sqrt{T}}T.
    \end{gather*}
    The above bound holds because when $t\ge \log_{\frac{1}{\lambda}}\frac{\sqrt{T}}{1-\lambda}$, we have $\frac{\lambda^t}{1-\lambda}\le \frac{1}{\sqrt{T}}$. The considered quantity is divided into two parts with this threshold $\log_{\frac{1}{\lambda}}\frac{\sqrt{T}}{1-\lambda}$. The first part is no greater than $\log_{\frac{1}{\lambda}}\frac{\sqrt{T}}{1-\lambda}\frac{\lambda}{1-\lambda}$ and the second part is no greater than $\frac{1}{\sqrt{T}}T=\sqrt{T}$. The result follows after simple rearrangements.
\end{proof}
In particular, take $C' = \frac{K(1+\lambda)}{1-\lambda}+(1+\lambda)C+2C_1$ as it holds in the analysis of \Cref{algo:full_feedback}, we have
\[ \text{Regret} \leq \left(\left(4\sqrt{\frac{1}{2} \ln \frac{2T}{\delta}} \frac{1 + \lambda}{(1 - \lambda)^2}+4C_1\right)+1-\lambda \right) \sqrt T + \frac{\lambda}{2} \log_\frac{1}{\lambda} \frac{T}{(1 - \lambda)^2}, \]
conditioning on the good event that the estimation (of \Cref{lem:DKW}) succeeds for every $1 \leq t \leq T$. Now take $\delta = \frac{1}{T}$ and note that $\Pr[\text{bad event}] \leq \frac{1}{T}$. By using the trivial regret bound $T$ for the failure event, we have
\[ \text{Regret} \leq \left(\left(4\sqrt{\frac{1}{2} \ln 2T^2} \frac{1 + \lambda}{(1 - \lambda)^2}+4C_1\right)+1-\lambda \right) \sqrt T + \frac{\lambda}{2} \log_\frac{1}{\lambda} \frac{T}{(1 - \lambda)^2} + \frac{1}{T} \cdot T, \]
and this concludes the proof of \Cref{thm:regret_full_feedback}.

\subsection{Proof of \Cref{cor:unknownG}}
Note that the function $Q$ depends on both $F$ and $G$ as $V$ does. With the additional condition that $G$ also needs to be estimated, we use $Q_{\cdot, \hat G}$ and $Q_{\cdot, G}$ to denote the corresponding version computed using $\hat G$ and $G$ respectively. We also extend this notation on $V$ in this scenario (c.f. \Cref{algo:censored_feedback}).

We start with two simple lemmas. These two lemmas are direct corollaries of \Cref{lem:DKW} and \Cref{lem:diff_vvforanydist}.
\begin{lemma}\label{lem:DKW_G}
    Let $\hat G_t$ be the estimated distribution at round $t$ in \Cref{algo:full_feedback}. With probability $1-\delta$,
    \[ \sup_x|\hat G_t(x)-G(x)|\le \sqrt{\frac{1}{2}\ln\frac{2}{\delta}}t^{-\frac{1}{2}}. \]
\end{lemma}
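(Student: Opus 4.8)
The plan is to obtain \Cref{lem:DKW_G} as a verbatim instance of \Cref{lem:DKW}. The one thing to verify is that the estimator the (modified) algorithm maintains is genuinely the empirical distribution function of an i.i.d.\ sample from $G$. By construction (cf.\ the added estimation step, \Cref{algoline:estimate_G} of \Cref{algo:censored_feedback}), at round $t$ the algorithm uses $\hat G_t(x) = \frac{1}{t}\sum_{i=1}^t \mathbf 1_{\{v_i \le x\}}$, i.e.\ exactly the empirical c.d.f.\ of the first $t$ observed valuations $v_1,\dots,v_t$. By the stochastic stationarity assumption in the Preliminaries, the $v_i$ are i.i.d.\ with cumulative distribution function $G$, and each $v_i$ is drawn independently of the history $\bm h_i$ (hence of all previous bids and of any internal randomness of the algorithm). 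Thus the hypotheses of \Cref{lem:DKW} are met with the roles of $F$ and $(m_i)_{i=1}^t$ played by $G$ and $(v_i)_{i=1}^t$.

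First I would record the independence structure precisely: conditioning on the past, $v_t$ is a fresh draw from $G$, and because the valuation is revealed to the buyer \emph{before} bidding, the realized value is not perturbed by the algorithm's subsequent decisions — the algorithm merely consumes $v_t$ downstream. This makes $v_1,\dots,v_t$ a bona fide i.i.d.\ sample from $G$. Then I would invoke \Cref{lem:DKW} directly (with $\delta$ unchanged) to conclude that with probability at least $1-\delta$, $\sup_x |\hat G_t(x) - G(x)| \le \sqrt{\tfrac12 \ln \tfrac2\delta}\, t^{-\frac12}$, which is precisely the claimed bound.

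There is essentially no obstacle in this lemma; the only point deserving a sentence of care is the independence argument above, so that the i.i.d.\ hypothesis of \Cref{lem:DKW} genuinely applies to the samples underlying $\hat G_t$. The quantitative combination needed later — a union bound over $t = 1,\dots,T$ together with the choice $\delta = 1/T$, mirroring the proof of \Cref{thm:regret_full_feedback} — is performed separately in the remainder of the proof of \Cref{cor:unknownG} and is not part of this lemma.
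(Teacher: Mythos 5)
Your proof is correct and matches the paper's treatment: the paper simply declares \Cref{lem:DKW_G} a direct corollary of \Cref{lem:DKW} applied to the i.i.d.\ valuations $v_1,\dots,v_t \sim G$, which is exactly your argument. Your extra sentence verifying that the $v_i$ form a genuine i.i.d.\ sample unaffected by the algorithm's decisions is a reasonable (if implicit in the paper) point of care.
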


\begin{lemma}\label{lem:unknownG_diff_tilde_hat}
    For any round $t\le  T$, budget $B_t$, it holds that
    \[ \sup_{B_t}|V_{\hat F,\hat G}(B_t,t)-\tilde V_{\hat F, \hat G}(B_t,t)|\le \frac{1}{1-\lambda}\lambda^{t_0-t}.\]
\end{lemma}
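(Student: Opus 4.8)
The plan is to obtain \Cref{lem:unknownG_diff_tilde_hat} as an immediate instantiation of \Cref{lem:diff_vvforanydist}. The key observation is that at the moment \Cref{algo:full_feedback} (now also maintaining the estimate $\hat G$, with the update of \Cref{algoline:estimate_G} of \Cref{algo:censored_feedback}) begins the backward recursion for round $t$, both estimates $\hat F$ and $\hat G$ have already been frozen from the data observed up to that point. Hence, conditioning on the history $\bm h_t$ and on $v_t$, they are \emph{fixed} distributions, and $V_{\hat F,\hat G}(\cdot,\cdot)$ is exactly the value function studied in \Cref{lem:diff_vvforanydist} with the pair $(F,G)$ replaced by $(\hat F,\hat G)$.

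With that identification in place, I would simply check the hypothesis of \Cref{lem:diff_vvforanydist}: the base case of the recursion is set to $V_{\hat F,\hat G}(B_{t_0},t_0)=0$, which lies in $\bigl[0,\tfrac{1}{1-\lambda}\bigr]$, and $\tilde V_{\hat F,\hat G}$ is by definition (see the remark following \Cref{lem:diff_vvforanydist}) the quantity obtained by running the same recursion downward from the correctly-initialized base. \Cref{lem:diff_vvforanydist} then yields $\sup_{B_t}\lvert V_{\hat F,\hat G}(B_t,t)-\tilde V_{\hat F,\hat G}(B_t,t)\rvert\le \tfrac{1}{1-\lambda}\lambda^{t_0-t}$ verbatim. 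If one preferred a self-contained argument, the proof of \Cref{lem:diff_vvforanydist} transcribes directly: backward induction on $\tau$ from $t_0$ down to $t$; the base case holds because both value functions lie in $[0,1/(1-\lambda)]$; and the inductive step contracts the gap by a factor $\lambda$ because the $(v-b^*)\,\hat F(b^*)$ instantaneous-reward terms cancel exactly while the change in the maximizing bid is absorbed by a one-sided suboptimality comparison, exactly as in the proof of \Cref{lem:tildehat_diff}.

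There is essentially no obstacle here; the only point requiring (minor) care is the one flagged above — namely that $\hat F$ and $\hat G$ must be regarded as fixed for the purpose of invoking \Cref{lem:diff_vvforanydist}, so that no sampling randomness leaks into the bound — together with the observation that the supremum over $B_t$ is already built into \Cref{lem:diff_vvforanydist}. In particular, and in contrast to the companion \Cref{lem:DKW_G}, no concentration inequality is needed for this lemma: it is purely a statement about truncating the horizon of a deterministic dynamic program, and the discount factor $\lambda^{t_0-t}$ does all the work.
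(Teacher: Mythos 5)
Your proposal is correct and matches the paper exactly: the paper also obtains \Cref{lem:unknownG_diff_tilde_hat} as a direct corollary of \Cref{lem:diff_vvforanydist} applied to the fixed pair $(\hat F,\hat G)$ with the zero initialization at $t_0$. Your added remarks (that $\hat F,\hat G$ are frozen at the time of the backward recursion and that no concentration argument is needed) are accurate but not load-bearing beyond what the paper already asserts.
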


In the following proof, we will bridge $V(B_t,t)$ and $V_{\hat F, \hat G}(B_t, t)$---the estimated value function, gradually. For distributions $A, B$, the notation $V_{A, B}$ refers to the value function computed when $F=A$ and $G=B$.
\begin{lemma}\label{lem:unknownG_diff_vv}
    With probability at least $1-\frac{\delta}{2T}$, for any given $t\le T$ and budget $B_t$, we have
    \[ |V_{\hat F,\hat G}(B_t,t)-V_{\hat F,G}(B_t,t)|\le \sqrt{\frac{1}{2}\ln\frac{4T}{\delta}} \frac{1}{(1-\lambda)^2} t^{-\frac{1}{2}}. \]
\end{lemma}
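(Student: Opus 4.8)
The plan is to reuse the backward-induction scheme behind \Cref{lem:vv_diff}, exploiting one simplification: $V_{\hat F,\hat G}(\cdot,\cdot)$ and $V_{\hat F,G}(\cdot,\cdot)$ are produced by the \emph{same} recursion run at round $t$---same empirical $\hat F=\hat F_t$, same cut-off $t_0$, same zero initialization $V_{\hat F,\hat G}(\cdot,t_0)=V_{\hat F,G}(\cdot,t_0)=0$---the only difference being that one step takes the expectation over $\hat G_t$ and the other over $G$. Hence, setting $\epsilon_\tau\coloneqq\sup_{B}|V_{\hat F,G}(B,\tau)-V_{\hat F,\hat G}(B,\tau)|$, we have $\epsilon_{t_0}=0$ and I will prove the one-step bound
\[
\epsilon_\tau\ \le\ \lambda\,\epsilon_{\tau+1}\ +\ \frac{\|\hat G_t-G\|_\infty}{1-\lambda},
\]
where $\|\cdot\|_\infty$ is the sup-norm over $[0,1]$. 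Unrolling this geometric recursion from $\tau=t_0$ down to $\tau=t$ gives $\epsilon_t\le\|\hat G_t-G\|_\infty/(1-\lambda)^2$; invoking \Cref{lem:DKW_G} at round $t$ with failure probability $\delta/(2T)$ to get $\|\hat G_t-G\|_\infty\le\sqrt{\tfrac12\ln\tfrac{4T}{\delta}}\,t^{-1/2}$ with probability at least $1-\delta/(2T)$ then yields exactly the claimed bound.

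For the one-step bound, fix $\tau$ and a budget $B$, put $W\coloneqq V_{\hat F,G}(\cdot,\tau+1)$ and $\hat W\coloneqq V_{\hat F,\hat G}(\cdot,\tau+1)$, and for a continuation-value function $U$ let $g_U(v)\coloneqq\max_{b\in[0,B]}\big\{[(v-b)+\lambda U(B-b)]\hat F(b)+\lambda U(B)(1-\hat F(b))\big\}$ denote the greedy value; then $V_{\hat F,G}(B,\tau)=\E_{v\sim G}[g_W(v)]$ and $V_{\hat F,\hat G}(B,\tau)=\E_{v\sim\hat G}[g_{\hat W}(v)]$. Write the gap as $\E_{v\sim G}[g_W(v)-g_{\hat W}(v)]+\big(\E_{v\sim G}[g_{\hat W}(v)]-\E_{v\sim\hat G}[g_{\hat W}(v)]\big)$. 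The first term has absolute value at most $\lambda\epsilon_{\tau+1}$, because $|\max f-\max g|\le\sup|f-g|$ and the two $Q$-functions differ only in their discounted continuation values, each controlled by $\lambda\epsilon_{\tau+1}$ via the induction hypothesis. For the second term, observe that for each fixed $b$ the bracket defining $g_{\hat W}$ is affine in $v$ with nonnegative slope $\hat F(b)$, so $g_{\hat W}$ is \emph{non-decreasing} in $v$; being valued in $[0,\tfrac1{1-\lambda}]$, its total variation on $[0,1]$ is at most $\tfrac1{1-\lambda}$. Integration by parts (equivalently $\E[g_{\hat W}(V)]=\int_0^{1/(1-\lambda)}\Pr[g_{\hat W}(V)\ge s]\,ds$, with monotonicity turning $\{g_{\hat W}(V)\ge s\}$ into a one-sided event in $V$) then bounds the second term by $\|\hat G_t-G\|_\infty/(1-\lambda)$. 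Adding the two estimates gives the one-step bound.

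The main obstacle is the second, change-of-measure term: \Cref{lem:DKW_G} only controls the uniform CDF discrepancy $\|\hat G_t-G\|_\infty$, which alone says nothing about the gap between $\E_G$ and $\E_{\hat G}$ of an arbitrary bounded function. What makes it work is precisely that the greedy value $g_{\hat W}$ is monotone in the valuation $v$---a structural feature of the Bellman operator here, the per-round surplus $(v-b)\hat F(b)$ being increasing in $v$ while the continuation terms are $v$-free---which upgrades the sup-norm CDF bound to an $O(\|\hat G_t-G\|_\infty)$ bound on the expectation gap. The remaining ingredients (the $\lambda$-contraction of the first term, the geometric summation, and the union-bound bookkeeping of the failure probability) are routine and mirror the proof of \Cref{lem:vv_diff}. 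The same argument, finally, goes through verbatim if $V_{\hat F,\hat G}$ and $V_{\hat F,G}$ are instead read as the $t_0\to\infty$ limiting infinite-horizon value functions, since the bound on $\epsilon_t$ is independent of $t_0$.
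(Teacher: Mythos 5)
Your proposal is correct and follows essentially the same route as the paper's proof: the same backward induction from the zero-initialized base case at $t_0$, the same split into a $\lambda$-contraction term (comparing the two $Q$/continuation functions under one measure) plus a change-of-measure term, the same monotonicity-in-$v$ and integration-by-parts argument to convert the DKW sup-norm bound on $\hat G-G$ into a bound of $\|\hat G-G\|_\infty/(1-\lambda)$ on the expectation gap, and the same union-bound accounting. The only cosmetic difference is that you unroll the recursion $\epsilon_\tau\le\lambda\epsilon_{\tau+1}+c$ as a geometric sum, whereas the paper closes a fixed induction hypothesis of the form $\frac{D}{(1-\lambda)^2}t^{-1/2}$; both give the identical constant.
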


\begin{proof}
First we note that \Cref{lem:DKW_G} states with probability at least $1-\frac{\delta}{2T}$, $\sup_x|G_t(x)-G(x)|\le \sqrt{\frac{1}{2}\ln\frac{4T}{\delta}}t^{-\frac{1}{2}}$. Now we apply backward induction. When $t=t_0$, the induction basis holds trivially since $|V_{\hat F,\hat G}(B_t,t)-V_{\hat F,G}(B_t,t)|=0\le \sqrt{\frac{1}{2}\ln\frac{4T}{\delta}} \frac{1}{1-\lambda} t^{-\frac{1}{2}}$. Assume the induction hypothesis holds for $\tau=t+1$. For any $t<t_0, v_t$ and $B_t$, it holds that
\begin{align*}
    Q_{\hat F, G}(B_t,t,b_t^*)&-Q_{\hat F, \hat G}(B_t,t,b_t)\le Q_{\hat F, G}(B_t,t,b_t^*)-Q_{\hat F, \hat G}(B_t,t,b_t^*)\\
    &=\lambda \hat{F}(b_t^*)V_{\hat F,G}(B_t-b_t^*,t+1)+\lambda (1-\hat{F}(b_t^*))V_{\hat F, G}(B_t,t+1)\\
    &-\lambda \hat{F}(b_t^*)V_{\hat F, G}(B_t-b_t^*,t+1)-\lambda (1-\hat{F}(b_t^*))V_{\hat F,\hat G}(B_t,t+1)\\
    &\le \lambda \hat{F}(b_t^*)|V_{\hat F, G}(B_t-b_t^*,t+1)-V_{\hat F,\hat  G}(B_t-b_t^*,t+1)|\\
    &+\lambda (1-\hat{F}(b_t^*))|V_{\hat F, G}(B_t,t+1)-V_{\hat F, \hat G}(B_t,t+1)|\\
    &\le \sqrt{\frac{1}{2}\ln\frac{4T}{\delta}} \frac{\lambda}{(1-\lambda)^2} t^{-\frac{1}{2}}.
\end{align*}
The first inequality holds since by construction $b_t^*$ is the optimal bid under $\hat F$ and $G$'s view while $b_t$ is the optimal bid under $\hat F$ and $\hat G$'s view. The second inequality holds with the triangle inequality. And the third inequality is due to the induction hypothesis. Finally, by symmetry---swap $G$ and $\hat G$ and repeat the proof above, it holds that $|Q_{\hat F, G}(B_t,t,b_t)-Q_{\hat F, \hat G}(B_t,t,b_t^*)|\le\sqrt{\frac{1}{2}\ln\frac{4T}{\delta}} \frac{\lambda}{(1-\lambda)^2} t^{-\frac{1}{2}}$.

Next, since $V_{\hat F,\hat G}(B_t,t)=\E_{\hat G}[Q_{\hat F, \hat G}(B_t,t,b_t)]$ and $V_{\hat F,G}(B_t,t)=\E_{ G}[Q_{\hat F, v_t}(B_t,t,b_t^*)]$, we have
\[ |V_{\hat F,G}(B_t,t)-V_{\hat F,\hat G}(B_t,t)|\le \Delta_1+\Delta_2, \]
where
\begin{align*}
    \Delta_1&=|{\E_{ G}[Q_{\hat F, G}(B_t,t,b_t^*)]-\E_{\hat G}[Q_{\hat F, G}(B_t,t,b_t^*)]}| \text{ and} \\
    \Delta_2&=|{\E_{\hat G}[Q_{\hat F, \hat G}(B_t,t,b_t)]-\E_{\hat G}[Q_{\hat F, G}(B_t,t,b_t^*)]}|.
\end{align*}

\begin{itemize}
    \item To bound $\Delta_1$: since $Q_{\hat F, G}$ is supported on $\left[0,\frac{1}{1-\lambda} \right]$ and the difference between $G$ and $\hat{G}$ is upper bounded by $\sqrt{\frac{1}{2}\ln\frac{4T}{\delta}}t^{-\frac{1}{2}}$, $\Delta_1$ is bounded by $\sqrt{\frac{1}{2}\ln\frac{4T}{\delta}}t^{-\frac{1}{2}}\frac{1}{1-\lambda}$. Note that we use the fact that with integration by parts, $\int_0^1 Q\,\mathrm d(G-\hat G)=Q(G-\hat G)|_0^1 - \int_0^1 (G-\hat G)\,\mathrm dQ$. Therefore, it holds that $|\Delta_1|\le\int_0^1 |G-\hat G||\mathrm dQ|$. Since $Q$ is monotone w.r.t. $v_t$ and is two-sided bounded, it holds that $|\Delta_1|\le\sqrt{\frac{1}{2}\ln\frac{4T}{\delta}}t^{-\frac{1}{2}}\frac{1}{1-\lambda}$.
    \item To bound $\Delta_2$, it is clear that $\Delta_2\le \sqrt{\frac{1}{2}\ln\frac{4T}{\delta}} \frac{\lambda}{(1-\lambda)^2} t^{-\frac{1}{2}} $ by linearity of expectation.
\end{itemize}

Therefore, we obtain
\[ |V_{\hat F,G}(B_t,t)-V_{\hat F,\hat G}(B_t,t)|\le \sqrt{\frac{1}{2}\ln\frac{4T}{\delta}} \frac{1}{(1-\lambda)^2} t^{-\frac{1}{2}}, \]
which finishes induction step and concludes the proof.
\end{proof}

\begin{lemma}
For any $t\le T$ and budget $B_t$, with probability at least $1-\frac{\delta}{T}$, it holds that:
\[ |V_{F,G}(B_t,t)-\tilde V_{\hat F,\hat G}(B_t,t)|\le \left(\sqrt{\frac{1}{2} \ln \frac{4T}{\delta}} \frac{2 + \lambda}{(1 - \lambda)^2}+C_1 \right) t^{-\frac{1}{2}}. \]
\end{lemma}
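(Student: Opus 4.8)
The plan is to reach $\tilde V_{\hat F,\hat G}(B_t,t)$ from $V_{F,G}(B_t,t)$ by inserting two intermediate value functions and controlling each gap with a lemma already in hand. Concretely, I would write
\begin{align*}
|V_{F,G}(B_t,t)-\tilde V_{\hat F,\hat G}(B_t,t)|
\le{}& |V_{F,G}(B_t,t)-V_{\hat F,G}(B_t,t)| + |V_{\hat F,G}(B_t,t)-V_{\hat F,\hat G}(B_t,t)| \\
&+ |V_{\hat F,\hat G}(B_t,t)-\tilde V_{\hat F,\hat G}(B_t,t)|,
\end{align*}
and bound the three summands in turn. The third summand is $\sup_{B_t}|V_{\hat F,\hat G}-\tilde V_{\hat F,\hat G}|$, which \Cref{lem:unknownG_diff_tilde_hat} (equivalently \Cref{lem:diff_vvforanydist} applied with distributions $\hat F,\hat G$) bounds by $\frac{1}{1-\lambda}\lambda^{t_0-t}$, hence by $C_1 t^{-1/2}$ using the choice of $t_0$ in \Cref{algo:censored_feedback}. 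The second summand is precisely the statement of \Cref{lem:unknownG_diff_vv}, giving $\sqrt{\frac12\ln\frac{4T}{\delta}}\frac{1}{(1-\lambda)^2}t^{-1/2}$ on the good event that the empirical $\hat G$ is uniformly $\sqrt{\tfrac12\ln\tfrac{4T}{\delta}}\,t^{-1/2}$-close to $G$ (\Cref{lem:DKW_G} with confidence parameter $\delta/2$), an event of probability at least $1-\frac{\delta}{2T}$.

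For the first summand I would re-invoke \Cref{lem:vv_diff}: comparing $V_{F,G}$ with $V_{\hat F,G}$ is exactly the situation treated there — a fixed, known second-stage distribution (nothing in the proof of \Cref{lem:vv_diff} uses that it is uniform) together with $F$ replaced by its empirical estimate $\hat F$ — provided we initialize $V_{F,G}(\cdot,t_0)$ and $V_{\hat F,G}(\cdot,t_0)$ with the same base value, so that the hypothesis of \Cref{lem:vv_diff} on $\sup_{B_{t_0}}|\cdot|$ holds trivially (it is $0$). Running that lemma with DKW confidence $\delta/2$ in place of $\delta$ yields $|V_{F,G}(B_t,t)-V_{\hat F,G}(B_t,t)|\le\sqrt{\frac12\ln\frac{4T}{\delta}}\frac{1+\lambda}{(1-\lambda)^2}t^{-1/2}$ on the good event that $\hat F$ is uniformly $\sqrt{\tfrac12\ln\tfrac{4T}{\delta}}\,t^{-1/2}$-close to $F$, of probability at least $1-\frac{\delta}{2T}$ by \Cref{lem:DKW}.

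Summing the three bounds gives $\sqrt{\frac12\ln\frac{4T}{\delta}}\bigl(\frac{1+\lambda}{(1-\lambda)^2}+\frac{1}{(1-\lambda)^2}\bigr)t^{-1/2}+C_1 t^{-1/2}=\bigl(\sqrt{\frac12\ln\frac{4T}{\delta}}\frac{2+\lambda}{(1-\lambda)^2}+C_1\bigr)t^{-1/2}$, exactly the claimed bound, and a union bound over the two good events (for $\hat F$ and for $\hat G$) leaves failure probability at most $\frac{\delta}{T}$. The only genuinely delicate point is the probabilistic bookkeeping: one must split the confidence budget $\delta$ between the $F$- and $G$-estimation events so that the $\ln\frac{4T}{\delta}$ constant and the overall $1-\frac{\delta}{T}$ guarantee both come out as stated, and one must be careful that the base-case hypotheses required by \Cref{lem:vv_diff} and \Cref{lem:unknownG_diff_vv} are satisfied by giving all intermediate value functions a common value at $t_0$, so that the error introduced by the algorithm's zero-initialization is entirely absorbed into the third summand via \Cref{lem:unknownG_diff_tilde_hat}. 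Everything else is a direct concatenation of the cited lemmas.
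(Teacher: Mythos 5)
Your decomposition into the three gaps $|V_{F,G}-V_{\hat F,G}|$, $|V_{\hat F,G}-V_{\hat F,\hat G}|$, and $|V_{\hat F,\hat G}-\tilde V_{\hat F,\hat G}|$, bounded respectively by \Cref{lem:vv_diff}, \Cref{lem:unknownG_diff_vv}, and \Cref{lem:unknownG_diff_tilde_hat} with the confidence budget $\delta$ split evenly between the $F$- and $G$-estimation events, is exactly the paper's argument (the paper's $\Delta_3+\Delta_2+\Delta_1$ and its Bonferroni step). The proposal is correct and takes essentially the same approach as the paper.
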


\begin{proof}
In order to bound the difference between $\tilde V_{\hat F,\hat G}(B_t,t)$ and $V_{F,G}(B_t,t)$. We first rewrite
\[ |V_{F,G}(B_t,t)-\tilde V_{\hat F,\hat G}(B_t,t)|\le \Delta_1+\Delta_2+\Delta_3, \]
where
\begin{align*}
    \Delta_1&=|\tilde V_{\hat F,\hat G}(B_t,t)-V_{\hat F,\hat G}(B_t,t)|,\\ \Delta_2&=|V_{\hat F,\hat G}(B_t,t)-V_{\hat F,G}(B_t,t)| \text{ and}\\
    \Delta_3&=|V_{\hat F,G}(B_t,t)-V_{F, G}(B_t,t)|.
\end{align*}

\begin{itemize}
    \item To bound $\Delta_1$: using \Cref{lem:unknownG_diff_tilde_hat} and the definition of $t_0$, we conclude that $\Delta_1\le \frac{C_1}{\sqrt{t}}$.
    \item To bound $\Delta_2$: using \Cref{lem:unknownG_diff_vv}, it holds that $\Delta_2\le \sqrt{\frac{1}{2}\ln\frac{4T}{\delta}} \frac{1}{(1-\lambda)^2} t^{-\frac{1}{2}}$.
    \item To bound $\Delta_3$: using \Cref{lem:vv_diff}, we obtain that $\Delta_3\le\sqrt{\frac{1}{2} \ln \frac{4T}{\delta}} \frac{1 + \lambda}{(1 - \lambda)^2} t^{-\frac{1}{2}}$. Note that we use Bonferroni's method to divide $\delta$ into two parts for the union bound of error in the estimations of $F$ and $G$.
\end{itemize}

Therefore, it holds that:
\[ |V_{F,G}(B_t,t)-\tilde V_{\hat F,\hat G}(B_t,t)|\le \left(\sqrt{\frac{1}{2} \ln \frac{4T}{\delta}} \frac{2 + \lambda}{(1 - \lambda)^2}+C_1 \right) t^{-\frac{1}{2}}. \]
\end{proof}

Now, we provide a bound similar to \Cref{lem:r1} for \Cref{cor:unknownG}. Similarly, we define
\begin{align*}
    \hat{H}_t &\coloneqq (v_t-b_t)\hat{F}(b_t)+\lambda \hat{F}(b_t)\tilde V_{\hat F,\hat G}(B_t-b_t,t+1)+\lambda (1-\hat{F}(b_t)) \tilde V_{\hat F,\hat G}(B_t,t+1),\\
    \tilde{H}_t&\coloneqq(v_t-b_t)F(b_t)+\lambda {F}(b_t)V_{F, G}(B_t-b_t,t+1)+\lambda (1-F(b_t)) V_{F,G}(B_t,t+1),\\
    H_t&\coloneqq(v_t-b_t^*)F(b_t^*)+\lambda {F}(b_t^*)V_{F, G}(B_t-b_t,t+1)+\lambda (1-F(b_t^*)) V_{F,G}(B_t,t+1).
\end{align*}

Recall that $R_1=\E_{\bm v\sim G^T} \left[\sum_{t=1}^T (H_t-\tilde H_t) \right]$. The following series of arguments holds. First, we have
\[ R_1\le \Biggl|{\E_{\bm v\sim G^T} \left[\sum_{t=1}^T (H_t-\hat H_t)\right]} \Biggr|+\E_{\bm v\sim G^T} \left[\sum_{t=1}^T|\tilde H_t-\hat H_t|\right]. \]
Followed from \Cref{lem:unknownG_diff_vv}, it holds that
\[ \Biggl|{\E_{\bm v\sim G^T}\left[\sum_{t=1}^T H_t-\hat H_t\right]}\Biggr|\le \left(\sqrt{\frac{1}{2} \ln \frac{4T}{\delta}} \frac{2 + \lambda}{(1 - \lambda)^2}+C_1 \right) \sum_{t=1}^T t^{-\frac{1}{2}}. \]

It suffices to bound $|\tilde H_t-\hat H_t|$. To do so we rewrite
\[ |\tilde H_t-\hat H_t|\le \Delta_1+\Delta_2+\Delta_3+\Delta_4+\Delta_5, \]
where
\begin{align*}
    \Delta_1&=|(v_t-b_t)\hat{F}(b_t)-(v_t-b_t)F(b_t)|,\\
    \Delta_2&=\lambda \hat{F}(b_t)|\tilde V_{\hat F,\hat G}(B_t-b_t,t+1)-V_{F, G}(B_t-b_t,t+1)|,\\
    \Delta_3&=|\lambda \hat{F}(b_t)V_{F, G}(B_t-b_t,t+1)-\lambda {F}(b_t)V_{F, G}(B_t-b_t,t+1)|,\\
    \Delta_4&=\lambda (1-\hat{F}(b_t)) |\tilde V_{\hat F,\hat G}(B_t,t+1)- V_{F,G}(B_t,t+1)| \text{ and}\\
    \Delta_5&=|\lambda (1-\hat F(b_t)) V_{F,G}(B_t,t+1)-\lambda (1- F(b_t)) V_{F,G}(B_t,t+1)|.
\end{align*}

\begin{itemize}
    \item To bound $\Delta_1$: using \Cref{lem:DKW}, it holds $\Delta_1\le\sqrt{\frac{1}{2} \ln \frac{4T}{\delta}} t^{-\frac{1}{2}} $.
    \item To bound $\Delta_2+\Delta_4$: using \Cref{lem:unknownG_diff_vv} and setting $C_1\leftarrow \frac{C_1}{\lambda}$, it holds that $\Delta_2+\Delta_4\le \lambda \left(\sqrt{\frac{1}{2} \ln \frac{4T}{\delta}} \frac{2 + \lambda}{(1 - \lambda)^2}+\frac{C_1}{\lambda} \right) t^{-\frac{1}{2}} $.
    \item To bound $\Delta_3$: since $V_{F, G}(\cdot,\cdot)\le \frac{1}{1-\lambda}$, then it holds that $\Delta_3\le \frac{\lambda}{1-\lambda}\sqrt{\frac{1}{2} \ln \frac{4T}{\delta}} t^{-\frac{1}{2}}$ .
    \item To bound $\Delta_5$: like the way we deal with $\Delta_3$, it holds that $\Delta_5\le \frac{\lambda}{1-\lambda}\sqrt{\frac{1}{2} \ln \frac{4T}{\delta}} t^{-\frac{1}{2}}$.
\end{itemize}

After summing them up, we have
\[ |\tilde H_t-\hat H_t|\le \left(\sqrt{\frac{1}{2} \ln \frac{4T}{\delta}} \frac{1 + 2\lambda}{(1 - \lambda)^2}+C_1 \right) t^{-\frac{1}{2}}. \]
Therefore, conditioning on the good event that the estimation succeeds (of both $F$ and $G$) for every $t$, $R_1$ is bounded by
\[ R_1\le \left(\sqrt{\frac{1}{2} \ln \frac{4T}{\delta}} \frac{3 + 3\lambda}{(1 - \lambda)^2}+2C_1 \right) \sum_{t=1}^T t^{-\frac{1}{2}}\le \left(\sqrt{\frac{1}{2} \ln \frac{4T}{\delta}} \frac{6(1 + \lambda)}{(1 - \lambda)^2}+4C_1 \right) \sqrt{T}. \]
We then apply the same techniques in \Cref{lem:r2} and \Cref{lem:final_regret} to yield
\[ \text{Regret} \leq \left(\sqrt{\frac{1}{2} \ln \frac{4T}{\delta}} \frac{6(1 + \lambda)}{(1 - \lambda)^2}+4C_1+1-\lambda \right) \sqrt{T}+\frac{ \lambda}{2} \log_\frac{1}{\lambda} \frac{T}{(1 - \lambda)^2}, \]
conditioning on the good event.

Now take $\delta = \frac{1}{T}$ and note that $\Pr[\text{bad event}] \leq \frac{1}{T}$. By using the trivial regret bound $T$ for the failure event, we obtain the desired bound
\[ \text{Regret} \leq \left(\sqrt{\frac{1}{2} \ln \frac{4T}{\delta}} \frac{6(1 + \lambda)}{(1 - \lambda)^2}+4C_1+1-\lambda \right) \sqrt{T}+\frac{ \lambda}{2} \log_\frac{1}{\lambda} \frac{T}{(1 - \lambda)^2}+1. \]

\section{Omitted proof in \Cref{subsect:censored}}
\subsection{Proof of \Cref{lem:TSN}}
Using \Cref{lem:convergencerate}, we have $ |{\Pr(\sup_b|\sqrt{2^n}(\hat{F}_n(1-b)-F(1-b))}|\ge r)-\Pr(\sup_b|B(1-b)|\ge r)|\le C_2 (1+r)^{-3}\ln^2 2^n (2^{-\frac{n}{6}})$. Since $2^n\le t\le 2^{n+1}-1$, it holds that:
\begin{gather*}
    |{\Pr(\sup_b|\sqrt{2^n}(\hat{F}_n(1-b)-F(1-b))}|\ge r)-\Pr(\sup_b|B(1-b)|\ge r)|\\ \le C_2 (1+r)^{-3}\ln^2 t \left(\frac{t}{2} \right)^{-\frac{1}{6}}.
\end{gather*}
Taking $M=C_2 2^{\frac{1}{6}}$ concludes the proof.

\subsection{Proof of \Cref{lem:Fbound}}
Before we proceed to bound the difference between $\hat{F}_n$ and $F$, we also need the following lemma to to characterize a property of Gaussian processes.
\begin{lemma}\label{lem:brownbridge}
    Let $\mathcal{B}(1-b)$ be a Gaussian process, we have
    \[ \Pr(\sup_b|\mathcal{B}(1-b)|\ge r)|\le C_3 e^{-C_4 r^2}, \]
    where $C_3$ and $C_4$ are constants.
\end{lemma}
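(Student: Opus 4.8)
\textbf{Proof proposal for \Cref{lem:brownbridge}.}
The plan is to recognize this as the classical concentration phenomenon for suprema of Gaussian processes and to derive it from the Borell--Tsirelson--Ibragimov--Sudakov (Borell--TIS) inequality. The first step is to record the two ingredients Borell--TIS needs. Since \Cref{lem:zeng} asserts weak convergence $\sqrt n(\hat F_n(1-\cdot)-F(1-\cdot))\Rightarrow \mathcal B$ in $\ell^\infty([0,1])$, the limit $\mathcal B$ is a tight, centered Gaussian random element of $\ell^\infty([0,1])$; in particular it admits a separable version with almost surely bounded sample paths. From a.s.\ boundedness together with the integrability of Gaussian suprema (Fernique/Landau--Shepp) it follows that $\mu \coloneqq \E\big[\sup_{b\in[0,1]}|\mathcal B(1-b)|\big]<\infty$ and that $\sigma^2 \coloneqq \sup_{b\in[0,1]}\operatorname{Var}(\mathcal B(1-b)) \le \E\big[\sup_b \mathcal B(1-b)^2\big] < \infty$. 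Both constants depend only on the law of $\mathcal B$, hence only on $F$.

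Next I would apply Borell--TIS separately to the centered Gaussian processes $\{\mathcal B(1-b)\}_b$ and $\{-\mathcal B(1-b)\}_b$, which share the variance envelope $\sigma^2$. This gives, for every $u \ge 0$,
\[ \Pr\!\Big(\sup_b \mathcal B(1-b) \ge \E[\sup_b \mathcal B(1-b)] + u\Big) \le e^{-u^2/(2\sigma^2)}, \]
and the analogous bound for $-\mathcal B$. Since $\sup_b|\mathcal B(1-b)| = \max\{\sup_b \mathcal B(1-b),\,\sup_b(-\mathcal B(1-b))\}$ and each centering constant is $\le \mu$, a union bound yields
\[ \Pr\!\Big(\sup_b|\mathcal B(1-b)| \ge \mu + u\Big) \le 2 e^{-u^2/(2\sigma^2)} \qquad (u \ge 0). \]

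Finally I would massage this into the stated form. For $r \ge 2\mu$, set $u = r-\mu \ge r/2$, so $(r-\mu)^2 \ge r^2/4$ and the previous display gives $\Pr(\sup_b|\mathcal B(1-b)| \ge r) \le 2 e^{-r^2/(8\sigma^2)}$; for $0 \le r < 2\mu$ we simply use $\Pr(\cdot) \le 1 \le e^{\mu^2/(2\sigma^2)}\,e^{-r^2/(8\sigma^2)}$. Thus the lemma holds with $C_3 = \max\{2,\, e^{\mu^2/(2\sigma^2)}\}$ and $C_4 = 1/(8\sigma^2)$. The only genuine work is the first step — verifying that $\mathcal B$ is separable with a.s.\ bounded paths and that $\mu,\sigma^2$ are finite — but this is immediate from the fact that \Cref{lem:zeng} is phrased as convergence in $\ell^\infty([0,1])$, which forces tightness of the limiting process; everything afterward is the routine Gaussian-concentration computation. (Should one prefer to avoid invoking Borell--TIS, an equivalent route is to bound $\mu$ by Dudley's entropy integral and obtain the tail from the Gaussian isoperimetric inequality, but Borell--TIS is the most direct.)
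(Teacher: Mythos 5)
Your proof is correct, and it is in fact more rigorous than the paper's own argument. The paper's proof is essentially a heuristic: it states that ``for a Gaussian process, the tail satisfies Gaussian distribution,'' writes down the one-dimensional Gaussian tail bound $\int_x^\infty \frac{1}{\sqrt{2\pi}}e^{-t^2/2}\,\mathrm dt \le C_6 e^{-C_7 x^2}$, and then asserts that one can ``rescale'' to conclude. This glosses over the real issue, namely that $\sup_b|\mathcal B(1-b)|$ is a supremum over an uncountable index set and is not itself a Gaussian random variable, so a pointwise tail bound does not immediately yield the claim. Your route via Borell--TIS is the standard way to make this rigorous: you correctly identify that tightness of the limit in $\ell^\infty([0,1])$ (from \Cref{lem:zeng}) gives a separable version with a.s.\ bounded paths, hence finite $\mu$ and $\sigma^2$, and then the two-sided application of Borell--TIS plus the case split $r \ge 2\mu$ versus $r < 2\mu$ produces explicit constants $C_3 = \max\{2, e^{\mu^2/(2\sigma^2)}\}$ and $C_4 = 1/(8\sigma^2)$. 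What your approach buys is a complete, checkable derivation with constants depending only on the law of $\mathcal B$ (hence on $F$); what the paper's approach buys is brevity, at the cost of leaving the supremum-versus-pointwise distinction unaddressed. The one small point worth double-checking in your write-up is the inequality $\sigma^2 \le \E[\sup_b \mathcal B(1-b)^2]$, which is fine but unnecessary: finiteness of $\sigma^2$ already follows from each $\mathcal B(1-b)$ being a coordinate of a tight Gaussian element of $\ell^\infty$.
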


For a Gaussian process, the tail satisfies Gaussian distribution. For a normal distribution, we have the following well-known inequality
\[
\int_x^\infty \frac{1}{\sqrt{2\pi}}e^{-\frac{t^2}{2}}\,\mathrm dt\le C_6 e^{-C_7 x^2},
\]
where $x\ge 0$ and $C_6$, $C_7$ are constants. For example, we can take $C_6=\frac{1}{2}e^{\frac{1}{2}}$ and $C_7=\frac{1}{2}$ in this inequality.

To bound for a certain Gaussian process, we rescale the random variable and the tail distribution also satisfies above property. So, there exists $C_3$ and $C_4$ to make \Cref{lem:brownbridge} hold.

Now we give a bound for the estimation of $\hat F$. First, we have
\[
\Pr(\sup_x |\sqrt{2^n}(\hat F(x)-F(x))|\ge r)\le \Pr(|\mathcal{B}|\ge r)+M(1+r)^{-3}\ln^2t \cdot t^{-\frac{1}{6}},
\]
We show that \Cref{lem:Fbound} establishes if we take $r=\max\{r_1, r_2\}$ where $r_1= \sqrt{\frac{1}{C_4}\ln\frac{4C_3\ln T}{\delta}}$ and $r_2=(4M\ln^3 T)^\frac{1}{3} t^{-\frac{1}{18}}\delta^{-\frac{1}{3}}$. Indeed, if $r \geq r_1$ then the first part on the right side is not greater than $\frac{\delta}{4\ln T}$ by \Cref{lem:brownbridge} and if $r \geq r_2$ then the second part is not greater than $\frac{\delta}{4\ln T}$. Take $\delta=T^{-\frac{5}{12}}$. It follows from simple comparison (between the rate of growth of $r_1$ and $r_2$) that there exists a constant $C_5$ for any $t$ and $T$ such that $\max\{r_1, r_2\} \leq C_5r_2$. Now, 
\[ \Pr\left(\sup_x \left|\sqrt\frac{t}{2}(\hat F(x)-F(x))\right|\ge C_5r_2\right) \leq \Pr(\sup_x |\sqrt{2^n}(\hat F(x)-F(x))|\ge C_5r_2) \leq \frac{1}{2T^{\frac{5}{12}}\ln T} \]
since $t \leq 2^{n+1}$. This concludes the proof of the lemma.

\subsection{Proof of \Cref{thm:censored_regret}}
Similar to the proof in the full-feedback case, let us first assume the knowledge of $G$. The regret of learning $G$ will be considered later. The proof of \Cref{thm:censored_regret} is short thanks to the tool-sets established by previous sections. As usual, we first condition on the good event that the estimation succeeds for every $t$. Then we add the contribution of the bad event to the regret in the final proof.

In order to bound the final regret, we need to bound the difference between $F$ and $\hat F_n$ and then apply the methodology used in the proof of \Cref{thm:regret_full_feedback}.

Dropping the first two rounds, for any $2^n\le t \le 2^{n+1}-1$, the estimation of $F$ is $\hat F_n$. Then using \Cref{lem:Fbound}, we obtain that, with probability at least $1-\frac{\delta}{2\ln T}$,
\[ \sup_x|\hat F_n(x)-F(x)|\le \sqrt{2}C_5(4M\ln^3T)^\frac{1}{3}t^{-\frac{5}{9}}T^{\frac{5}{36}}, \]
where $\delta = T^{-\frac{5}{12}}$. Note that we use the fact that if $1\le t \le T$, the algorithm updates for at most $\lfloor \ln T \rfloor$ times. Now, the new concentration bound (\Cref{lem:convergencerate}) effective changes $K, C$ and the convergence rate in \Cref{lem:r1}, \Cref{lem:r2} and \Cref{lem:final_regret}. Therefore, by substituting $\frac{C}{\sqrt t}$ with $\sqrt{2}C_5(4M\ln^3T)^\frac{1}{3}t^{-\frac{5}{9}}T^{\frac{5}{36}}\frac{1+\lambda}{(1-\lambda)^2}$ and $\frac{K}{\sqrt t}$ with $\sqrt{2}C_5(4M\ln^3T)^\frac{1}{3}t^{-\frac{5}{9}}T^{\frac{5}{36}}$ in the lemmas we obtain that conditioning on the good event (w.p. $1-\frac{\delta}{2}$) that the estimation of $F$ succeeds every time,
\[
R_1 \le 4C_1\sqrt{T}+\frac{9 \sqrt 2(1 + \lambda)}{2(1 - \lambda)^2}C_5(4M \ln^3T)^{\frac{1}{3}}T^{\frac{7}{12}},
\]
if \Cref{algo:censored_feedback} has the knowledge of $G$.

Next we add the estimation of $G$. Note that \Cref{lem:unknownG_diff_vv} decouples the regret of estimating $F$ and $G$, hence we may apply the same approach in the proof of \Cref{cor:unknownG} by taking $C=\sqrt{\frac{1}{2}\ln{\frac{4T}{\delta}}}\frac{1+\lambda}{(1-\lambda)^2}$ and $K=\sqrt{\frac{1}{2}\ln{\frac{4T}{\delta}}}$ and we have
\[
R_1 \le 4C_1\sqrt{T}+\frac{9 \sqrt 2(1 + \lambda)}{2(1 - \lambda)^2}C_5(4M \ln^3T)^{\frac{1}{3}}T^{\frac{7}{12}}+\left[\sqrt{\frac{1}{2}\ln{\frac{4T}{\delta}}}\frac{2(1+\lambda)}{(1-\lambda)^2} \right]\sqrt{T},
\]
conditioning on the good event (w.p. $1-\delta$) that the estimation of $F$ succeeds every time.

Finally, we use \Cref{lem:r2} and \Cref{lem:final_regret} to transform $R_1$ into the final regret, which yields
\begin{gather*}
    \text{Regret} \leq \frac{9 \sqrt 2(1 + \lambda)}{2(1 - \lambda)^2}C_5(4M \ln^3T)^{\frac{1}{3}}T^{\frac{7}{12}}+\left[\sqrt{\frac{1}{2}\ln4T^{\frac{17}{12}}}\frac{2(1+\lambda)}{(1-\lambda)^2} \right]\sqrt{T}\\+\left[(4C_1+1 -\lambda) \sqrt{T}+\frac{\lambda}{2} \log_\frac{1}{\lambda} \frac{T}{(1 - \lambda)^2}\right],
\end{gather*}
conditioning on the good event that the estimations of $F$ and $G$ succeeds for every $1 \leq t \leq T$. Note that $\Pr[\text{bad event}] \leq \frac{1}{T^{\frac{5}{12}}}$. By using the trivial regret bound $T$ for the failure event, we have
\begin{gather*}
    \text{Regret} \leq \left(\frac{9 \sqrt 2(1 + \lambda)}{2(1 - \lambda)^2}C_5(4M \ln^3T)^{\frac{1}{3}}+1\right)T^{\frac{7}{12}}\\+\left(\sqrt{\frac{1}{2}\ln{4T^{\frac{17}{12}}}}\frac{2(1+\lambda)}{(1-\lambda)^2}+4C_1+1 -\lambda\right) \sqrt{T}+\frac{ \lambda}{2} \log_\frac{1}{\lambda} \frac{T}{(1 - \lambda)^2}.
    \end{gather*}
This concludes the proof.

\end{document}